\newcommand{\calV}{{\mathcal V}}
\newcommand{\calG}{{\mathcal G}}
\newcommand{\calE}{{\mathcal E}}
\newcommand{\calS}{{\mathcal S}}
\newcommand{\calT}{{\mathcal T}}
\newcommand{\calL}{{\mathcal L}}
\newcommand{\calU}{{\mathcal U}}
\newcommand{\calP}{{\mathcal P}}
\newcommand{\calJ}{{\mathcal J}}
\newcommand{\calM}{{\mathcal M}}
\newcommand{\oball}{B^o}
\newcommand{\pt}{{(t)}}
\newcommand{\pone}{{(1)}}
\newcommand{\ptwo}{{(2)}}
\newcommand{\E}{{\mathbb E}}
\newcommand{\opt}{\mathsf{OPT}}
\newcommand{\sol}{\mathsf{SOL}}
\newcommand{\etal}{\textit{et~al.}\xspace}
\newcommand{\lnode}{\mathscr{L}}
\newcommand{\rnode}{\mathscr{R}}
\newcommand{\tw}{\widetilde{w}}
\newcommand{\hw}{\widehat{w}}
\newcommand{\bw}{\overline{w}}
\newcommand{\source}{\mathsf{s}}
\newcommand{\sink}{\mathsf{t}}
\newcommand{\dav}{d_\mathsf{av}}
\newcommand{\dhav}{{d_{-h,\mathsf{av}}}}
\newcommand{\dhfar}{{d_{-h,\mathsf{far}}}}
\newcommand{\LP}{\mathsf{LP}}
\newcommand{\mfl}{\textsf{MFL}\xspace}
\newcommand{\tmfl}{\textsf{TM-MFL}\xspace}
\newcommand{\network}{\mathcal{N}}
\newcommand{\floor}[1]{\lfloor #1\rfloor}
\newcommand{\ceil}[1]{\lceil #1\rceil}
\newcommand{\topl}{\mathrm{Top}_\ell}
\newcommand{\vco}{\vec{o}}
\newcommand{\vcv}{\vec{v}}
\newcommand{\fracflow}{\tilde{f}}
\newcommand{\intflow}{\bar{f}}
\DeclareMathOperator*{\argmin}{arg\,min}
\newcommand{\ckc}{\textsf{CkC}\xspace}
\newcommand{\dks}{\textsf{DkSup}\xspace}
\newcommand{\rdks}{\textsf{DkSupOut}\xspace}
\newcommand{\dom}{\textsf{DOkMed}\xspace}
\newtheorem{theorem}{Theorem}
\newtheorem{lemma}[theorem]{Lemma}
\newtheorem{corollary}[theorem]{Corollary}
\theoremstyle{definition}
\newtheorem{definition}{Definition}
\theoremstyle{remark}
\newtheorem{remark}{Remark}
\title{Approximation algorithms for clustering with dynamic points}
\author{Shichuan Deng$^1$ \and Jian Li$^1$ \and Yuval Rabani$^2$}
\date{$^1$Tsinghua University, China \\ $^2$The Hebrew University of Jerusalem, Israel}
\begin{document}

\maketitle

\begin{abstract}
We study two generalizations of classic clustering problems called \emph{dynamic ordered $k$-median} and 
\emph{dynamic $k$-supplier}, where the points that need clustering evolve over time, and we are allowed to move the cluster centers between consecutive time steps. 
In these dynamic clustering problems, the general goal is to minimize certain combinations of the \emph{service cost} of points and the \emph{movement cost} of centers, or to minimize one subject to some constraints on the other. We obtain a constant-factor approximation algorithm for dynamic ordered $k$-median under mild assumptions on the input. We give a 3-approximation for dynamic $k$-supplier and a multi-criteria approximation for its outlier version where some points can be discarded, when the number of time steps is two. We complement the algorithms with almost matching hardness results.
\end{abstract}


\section{Introduction}

Clustering a data set of points in a metric space is a fundamental abstraction of many practical problems of interest 
and has been subject to extensive study as a fundamental problem of both machine learning and combinatorial optimization. 
In particular, cluster analysis is one of the main methods of unsupervised learning, and clustering often models
facility location problems.
More specifically, some of the most well-studied clustering problems involve the following
generic setting. We are given a set $C$ of points in a metric space, and our goal is to compute a set of $k$ centers that 
optimizes a certain objective function which involves the distances between the points in $C$ and the computed centers. 
Two prominent examples are the $k$-median problem and the $k$-center problem. They are formally defined as follows.
Let $S$ denote the computed set of $k$ cluster centers, $d(j,S)=\min_{i\in S}d(i,j)$ be the minimum distance from 
a point $j\in C$ to $S$, and $D=(d(j,S))_{j\in C}$ be the \emph{service cost vector}.
The $k$-median problem aims to minimize the $L_1$ objective $\|D\|_1=\sum_{j\in C}d(j,S)$ over the choices of $S$, 
and $k$-center aims to minimize the $L_\infty$ objective $\|D\|_\infty =\max_{j\in C}d(j,S)$.
In general metric spaces and when $k$ is not a fixed constant, both problems are APX-hard \cite{hsu1979easy,jain2002greedy} and exhibit 
constant factor approximation algorithms \cite{gonzalez1985clustering,hochbaum1985best,arya2004local,charikar2012dependent,li2016pseudo,byrka2017improved}.
An important generalization is the \emph{ordered $k$-median} problem. Here, in addition to $C$
and $k$, we are given also a non-increasing weight vector $w\in\mathbb{R}_{\geq0}^{|C|}$. Letting $D^\downarrow$ denote 
the sorted version of $D$ in non-increasing order, the objective of ordered $k$-median is to minimize $w^\top D^\downarrow$.
This problem generalizes both $k$-center and $k$-median and 
has attracted significant attention recently. Several constant
factor approximation algorithms have been developed \cite{aouad2019ordered, byrka2018constant,chakrabarty2018interpolating,chakrabarty2019approximation}. We note that in the facility location literature, points are called clients and centers are called
facilities, and we will use these terms interchangeably.

In this paper, we study several dynamic versions of the classic clustering problems, in which the points that need clustering may change for each time step, and we are allowed to move the cluster centers in 
each time step, either subject to a constraint on the distance moved, or by incurring a cost proportional to 
that distance. These versions are motivated in general by practical applications of clustering,  where the 
data set evolves over time, reflecting an evolution of the underlying clustering model. Consider, for instance,
a data set representing the active users of a web service, and a clustering representing some meaningful
segmentation of the user base. The segmentation should be allowed to change over time, but if it is changed
drastically between time steps, then it is probably meaningless. For a more concrete example, consider the
following application scenario. A giant construction company with several construction teams is working 
in a city. The company has $k$ movable wireless base stations for their private radio communication, and each team also has a terminal device. The teams need to put their devices at a certain energy level, in order to maintain the communication channel between the device and the nearest base station. A team may finish their current project and move to another place to carry out new tasks at some time. Note that the wireless base stations are also movable at a certain expense.
Our high level objective is to have all teams covered by the base stations at all times, meanwhile minimizing the energy cost of all teams plus the cost of moving these base stations.

We study two problems of this flavor. The first problem, called dynamic ordered $k$-median (\dom), is a very general model that captures a wide range of dynamic clustering problems where the objective is to minimize the sum of service cost and movement cost. In particular, it generalizes dynamic versions of $k$-center and $k$-median. The problem is defined as follows. We are given a metric space and $T$ time steps. In each time step $t$, there is a set $C_t$ of clients 
that we need to serve, and we can choose the locations for $k$ mobile facilities to serve 
the clients (each client is served by its closest facility). Our goal is to minimize the total ordered service cost 
(i.e., the ordered $k$-median objective), summed over all time steps, plus the total cost of moving the $k$ facilities. We define the problem formally as follows.
\begin{definition}
	(\dom). Given a metric space $(X,d)$, an instance of $T$-\dom, $T\in\mathbb{Z}_+$ is specified by $\{C_t\}_{t=1}^T$, $\{F_t\}_{t=1}^T$, non-increasing weight vectors $\{w_t\in\mathbb{R}_{\geq0}^{|C_t|}\}_{t=1}^T$, and $\gamma>0,\,k\in\mathbb{Z}_+$. $T\geq2$ and $\gamma$ are constants. $C_t\subseteq X$ is the set of clients at time $t$, and $F_t\subseteq X$ is the set of candidate locations where we can place facilities. We are required to compute a sequence of multi-sets of facilities $\{A_t\}_{t=1}^T$ with $A_t\subseteq F_t,\,|A_t|=k$, minimizing the following sum of ordered service cost and movement cost,
	\begin{equation*}
	\sum_{t=1}^T w_t^\top\left(d(j,A_t)\right)_{j\in C_t}^\downarrow+\gamma\cdot\sum_{t=1}^{T-1}m(A_t,A_{t+1}),
	\end{equation*}
	where for a vector $\vec{v}$, $\vec{v}^\downarrow$ is the non-increasingly sorted version of $\vec{v}$, and \[m(Y,Z)=\min_{M_0\in M(Y,Z)}\sum_{(i,i')\in M_0}d(i,i')\] 
	is the minimum total distance among perfect matchings between two equal-sized multi-sets $Y$ and $Z$.
\end{definition}

In \dom, the second term in the objective (i.e., the total distance traveled by all facilities) can also be motivated by the online optimization problems of $k$-server \cite{manasse1990competitive,koutsoupias1995server} and $k$-taxi \cite{coester2019online,buchbinder2021online}.
In these problems, $k$ servers are present in a metric space; at each time step, one client is revealed with a request, and some server needs to travel in the metric space to fulfill the client's request, incurring a movement cost.
The unweighted version of \dom (i.e., each $w_t$ is an all-one vector) can be roughly regarded as an offline version of $k$-server, except that many clients show up at each time step, and the servers need to fix their locations and serve all clients simultaneously.

\dom is also related to the stochastic $k$-server problem, first studied by Dehghani~\etal\cite{dehghani2017stochastic}.
In this problem, we have $T$ time steps and $T$ distributions $\{P_t\}_{t\in[T]}$ given \emph{in advance}. 
The $t$-th client is drawn from $P_t$, and we can use $k$ movable servers. 
One variant they consider is that, after a client shows up, its closest server goes to the client's location and comes back, and the optimization objective is the total distance traveled by all servers; in expectation, this objective is the same as in \dom, if we consider \emph{non-ordered and weighted} clients and all weights sum up to 1 for each time slot.
We will further discuss the relation between \dom and stochastic $k$-server in Section~\ref{section:related:work}.

It is also natural to formulate dynamic clustering problems where the objective is to minimize just the service cost, subject to some constraints on the movement cost. This turns out to be technically very different from \dom. Our second problem, which we call dynamic $k$-supplier (\dks), is such a concrete problem, where the service cost is the 
$k$-supplier objective, i.e., the maximum client-facility connection distance over all time steps, and the constraint is that any facility cannot be moved further than a fixed distance $B\geq0$ between any two consecutive time steps. More formally:

\begin{definition}
	(\dks). Given a metric space $(X,d)$, an instance of $T$-\dks, $T\in\mathbb{Z}_+$ is specified by $\{C_t\}_{t=1}^T$, $\{F_t\}_{t=1}^T$, and $B\geq0,\,k\in\mathbb{Z}_+$. $T\geq2$ is the number of time steps, 
	$C_t\subseteq X$ is the set of clients for time $t$, and $F_t\subseteq X$ is the set of candidate locations where we can place facilities.
	We are required to compute a sequence of multi-sets of facilities $\{A_t\}_{t=1}^T$, with $A_t\subseteq F_t,\,|A_t|=k$, minimizing the maximum service cost of any client, $\max_t\max_{j\in C_t}d(j,A_t)$, subject to the constraint that there must exist a perfect matching between $A_t$ and $A_{t+1}$ for each $t\in[T-1]$, and the distance between each matched pair is at most $B$.

	In the outlier version (\rdks), we are additionally given the outlier constraints $\{l_t\in\mathbb{Z}_+\}_{t=1}^T$. We are required to identify a sequence of multi-sets of facilities $\{A_t\}_{t=1}^T$ and a sequence of subsets of served clients $\{S_t\subseteq C_t\}_{t=1}^T$, with $A_t\subseteq F_t,\,|A_t|=k,\,|S_t|\geq l_t$. The goal is to minimize the maximum service cost of any served client, $\max_t\max_{j\in S_t}d(j,A_t)$, with the constraint that there must exist a perfect matching between $A_t$ and $A_{t+1}$ for each $t\in[T-1]$, and the distance between each matched pair is at most $B$.
\end{definition}

For the outlier problem $T$-\rdks, a \emph{multi-criteria $(\alpha_0,\alpha_1,\dots,\alpha_T)$-approximation} is a polynomial-time algorithm that always outputs a solution with objective value at most $\alpha_0$ times the optimum, while the number of served clients is at least $\alpha_tl_t$ at time $t$, $t\in[T]$. Because standard $k$-supplier is APX-hard and prohibits any polynomial time approximation schemes (PTAS) unless $\mathrm{P=NP}$, we are interested in obtaining multi-criteria $(\alpha_0,1-\epsilon,\dots,1-\epsilon)$-approximation algorithms for some non-trivial $\alpha_0$ and any constant $\epsilon>0$, or pure approximations where $\alpha_0$ is non-trivial and all other $\alpha_t$'s are 1.

\begin{remark}
The solutions to both \dom and \dks are allowed to be multi-sets, since we regard the facilities as mobile ones and it is natural for them to be co-located. We note that all of our hardness results also apply if we only allow the solution to consist of subsets instead of multi-sets (See Section~\ref{section-outlier}).
\end{remark}

\subsection{Our results}
\label{section:our:results}

\subsubsection{Dynamic ordered \texorpdfstring{$k$}{k}-median}
We first study \dom. When all entries of the weight vectors are larger than some small constant $\epsilon>0$, we present a constant-approximation on general metrics. Moreover, for 2-\dom, i.e., $T=2$, we present a constant-factor approximation algorithm with no assumptions on the weight vectors.
\begin{theorem}\label{thm:sample:dom}
(Informal; see Theorem~\ref{thm:dom}).
When $T=2$, there exists a polynomial-time constant-factor approximation algorithm for \emph{2-\dom}.
When $T\geq3$ is a constant, and the smallest entry in $\{w_t\}_{t=1}^T$ is at least some constant $\epsilon>0$, there exists a polynomial-time $O(\gamma/\epsilon)$-approximation algorithm for \emph{$T$-\dom}.
\end{theorem}

\subparagraph{Our techniques.}
The key idea in our algorithm is to design a surrogate LP relaxation to approximate the ordered objective, and embed its fractional solution in a network flow instance. We proceed to round the fractional flow to an integral flow, thus obtaining the induced integral solution to the original problem. The network is constructed based on a filtering process introduced by Charikar and Li \cite{charikar2012dependent}. When estimating the service cost of each client, we also adapt the oblivious clustering arguments by Byrka~\etal\cite{byrka2018constant}, with a slight increase in the approximation factor due to the structure of the network. One notable difficulty we manage to overcome when translating an integral flow into an integral solution, is that the flow oftentimes indicates the opening of more than $k$ facilities, and we need to remove some of them without incurring a cost that is unbounded compared to the LP objective.


\subsubsection{Dynamic \texorpdfstring{$k$}{k}-supplier}

We first obtain inapproximability results for \dks and its outlier version \rdks.
These two problems turn out to be much harder when $T\geq3$ than the case of $T=2$. 

\begin{theorem}
	One has the following hardness results.
	\begin{itemize}
		\item Let $T\geq3$. There are no polynomial-time algorithms for \emph{$T$-\dks} with non-trivial approximation factors, unless $\mathrm{P=NP}$ (Theorem~\ref{thm:dksnphard}).
		There exists a constant $\epsilon_0\in(0,1)$, such that \emph{$T$-\rdks} admits no multi-criteria $(\alpha,1-\epsilon_0,\dots,1-\epsilon_0)$-approximations for any non-trivial factor $\alpha$, unless $\mathrm{P=NP}$ (Theorem~\ref{rdksapxhard}).
		\item There are no polynomial-time multi-criteria $(\alpha,1,1)$-approximation algorithms for \emph{2-\rdks} for any non-trivial factor $\alpha$, unless $\mathrm{P=NP}$ (Theorem~\ref{2rdkshard}).
	\end{itemize}
\end{theorem}

On the positive side, we present a flow-based 3-approximation for 2-\dks and a matching-based multi-criteria approximation for 2-\rdks.
The approximation guarantee for 2-\dks is optimal since vanilla $k$-supplier is NP-hard to approximate within a factor of $(3-\epsilon)$ for any $\epsilon>0$ \cite{hochbaum1986unified}.
The multi-criteria approximation guarantee for 2-\rdks is also nicely complemented by the aforementioned hardness result in Theorem~\ref{2rdkshard}.

\begin{theorem}\label{thm:sample:dks}
There exists a 3-approximation for \emph{2-\dks} (Theorem~\ref{thm:dks}).
For every constant $\epsilon>0$, there exists a multi-criteria $(3,1-\epsilon,1-\epsilon)$-approximation for \emph{2-\rdks} (Theorem~\ref{thm:dksout}).
\end{theorem}

\subparagraph{Our techniques.}
Our algorithm for 2-\dks first guesses the optimum $R^\star$, and uses a standard greedy algorithm (see, e.g., \cite{gonzalez1985clustering}) to create clusters for both time steps $t=1,2$. Since the maximum distance any facility can travel is $B$, we represent all such possible movements using edges of a bipartite graph. The bipartite graph is then embedded into a network by appending the aforementioned clusters on both sides, forming a network flow instance. The link capacities in the network are all integers, thus we can directly obtain an integral flow (because the corresponding coefficient matrix is totally unimodular), which in turn induces a 3-approximation to the original problem.

For 2-\rdks, we first guess a constant-size portion of ``heavy'' facilities in the optimal solution, properly modify the instance and solve an LP relaxation on the remaining problem. This guessing step is standard in multi-objective optimization \cite{grandoni2014new}. According to the LP solution, we form clusters using the filtering algorithm by Harris~\etal\cite{harris2019lottery}, and create a bipartite matching instance where each vertex either has no contribution to the coverage of clients, or represents a cluster and provides a certain number of nearby distinct clients to cover. By assigning vertex weights, the problem of covering some specified numbers of clients for $t=1,2$ becomes finding a matching that satisfies a lower bound of total weights on both sides of the bipartite graph. We round the LP-induced fractional matching to an integral one using the iterative rounding methods by Grandoni~\etal\cite{grandoni2014new}.

\subsection{Related work}
\label{section:related:work}
The ordered $k$-median problem generalizes a number of classic clustering problems like 
$k$-center, $k$-median, $k$-facility $l$-centrum, and has been studied extensively in 
the literature. There are numerous approximation algorithms known for its special cases. 
We survey here only the results most relevant to our work (ignoring, for instance, results
regarding restricted metric spaces or fixed $k$). 
Constant approximations for 
$k$-median can be obtained via local search, Lagrangian relaxations and the primal-dual schema, 
and LP-rounding \cite{arya2004local,byrka2017improved,jain2001approximation,cohen2022improved}. 
Constant approximations for $k$-center are obtained via greedy algorithms \cite{gonzalez1985clustering,hochbaum1985best}. 
Aouad and Segev \cite{aouad2019ordered} employ the idea of surrogate models and give the first 
$O(\log n)$-approximation for ordered $k$-median. Later, Byrka~\etal\cite{byrka2018constant} and
Chakrabarty and Swamy \cite{chakrabarty2018interpolating} both successfully devise constant-factor approximations for $k$-facility $l$-centrum and ordered $k$-median. Chakrabarty and Swamy \cite{chakrabarty2019approximation} subsequently improve the approximation factor for ordered $k$-median to $(5+\epsilon)$, using deterministic rounding in a unified 
framework.

In the closely-related \emph{$k$-center with outliers} problem, a.k.a. robust $k$-center, we are required to select $k$ open facilities $F\subseteq X$ in the finite metric space $(X,d)$, $m$ served clients $S\subseteq X$, and the goal is to minimize $\max_{j\in S}d(j,F)$. This problem is introduced by Charikar~\etal in \cite{charikar2001algorithms}, where they give a greedy algorithm that achieves an approximation factor of 3. A best-possible 2-approximation is given independently by Chakrabarty~\etal\cite{chakrabarty2016non} and Harris~\etal\cite{harris2019lottery}. Many of its variants are also studied. In matroid center with outliers, the input is the same as $k$-center with outliers, except that the cardinality constraint is replaced with a given matroid, and the set of open facilities is required to be an independent set of the matroid. Chen~\etal\cite{chen2016matroid} give the first 7-approximation for this problem, and a tight 3-approximation is later obtained by Harris~\etal\cite{harris2019lottery}. In knapsack center with outliers, the input is the same as $k$-center with outliers with the cardinality constraint removed, every facility has a non-negative weight, and the total weight of open facilities is required to be no more than a given threshold. Chen~\etal\cite{chen2016matroid} give a 3-approximation that violates the knapsack constraint by a factor of $(1+\epsilon)$ for this  problem, and Chakrabarty and Negahbani \cite{chakrabarty2019generalized} give the first pure 3-approximation.

Recently, Bandyapadhyay~\etal\cite{bandyapadhyay2019constant} introduce the \emph{$T$-colorful $k$-center} problem ($T$\ckc), as a generalization of $k$-center with outliers. In this problem, given a finite metric space $(X,d)$, there are $T=O(1)$ subsets $X_t\subseteq X$ and $T$ thresholds $m_t\in\mathbb{Z}_+$ for $t\in[T]$. We are then asked to select $k$ open facilities $F\subseteq X$, $T$ served \emph{client subsets} $S_t\subseteq X_t$ satisfying $|S_t|\geq m_t$ for $t\in[T]$, and the goal is to minimize $\max_{t\in[T]}\max_{j\in S_t}d(j,F)$. Evidently, $T$\ckc recovers $k$-center with outliers by setting $T=1$. Bandyapadhyay~\etal\cite{bandyapadhyay2019constant} give a pseudo 2-approximation for $T$\ckc by opening at most $k+T-1$ facilities. Anegg~\etal\cite{anegg2020technique} and Jia~\etal\cite{jia2020fair} independently obtain pure constant factor approximations for $T$\ckc. 
It is fairly easy to see that $T$-\rdks exactly recovers $T$\ckc by setting the movement constraint $B=0$, regarding the $T$ client sets at different time steps as having $T$ colors, and removing redundant co-located open facilities from the solution. Unfortunately, unlike $T$\ckc which has pure constant approximations for $T=O(1)$, our formulation of $T$-\rdks is seemingly much harder. In many cases, pure approximations with any non-trivial factors, or even multi-criteria approximations are impossible unless $\mathrm{P=NP}$.

Our problems are closely related to the mobile facility location problems (\mfl), introduced by Demaine~\etal\cite{demaine2009minimizing}.
In these problems, a static set of clients has to be served by a set of facilities that are given initial locations and can be moved to improve the service cost at the expense of incurring a facility movement cost. For the minimum total movement \mfl problem (\tmfl), Friggstad and Salavatipour \cite{friggstad2011minimizing} give an 8-approximation using LP-rounding, where all facilities have unit weights. Ahmadian~\etal\cite{ahmadian2013local} give a local search algorithm for \tmfl with weighted facilities and proportional movement costs via $p$-swaps with an approximation factor of $3+O(\sqrt{\log\log p/\log p})$, and specifically show that the factor is at most $499$ for $p=1$. Swamy \cite{swamy2016improved} obtains an 8-approximation for the case of arbitrary movement costs using the reduction to the matroid median problem. Krishnaswamy~\etal\cite{krishnaswamy2018constant} later improve the approximation factor of matroid median to 7.081.

The dynamic formulations of our problems are closely related to the facility location problem with evolving metrics, proposed by Eisenstat~\etal\cite{eisenstat2014facility}. In this problem, there are also $T$ time steps. While the facilities and clients are fixed, the underlying metric is changing. The total cost is the sum of facility-opening cost, client-serving cost and additional switching costs for each client. The switching cost is paid whenever a client switches facility between adjacent time steps. In comparison, our problem \dks considers the cost of moving facilities instead of opening costs, and allows the number of clients to change over time. Eisenstat~\etal\cite{eisenstat2014facility} consider the problem when the open facility set $A$ is fixed, and give a $O(\log(nT))$-approximation, where $n$ is the number of clients. They also show a hardness result on $o(\log T)$-approximations. An~\etal\cite{an2017dynamic} consider the case when the open facilities are allowed to evolve as well, and give a 14-approximation.

Our problems are also related to stochastic $k$-server \cite{dehghani2017stochastic} and the page migration problem \cite{black1989competitive,westbrook1994randomized}. 
As mentioned before, the expected objective of stochastic $k$-server is the same as in \dom, if we consider non-ordered and weighted clients and all weights sum up to 1 for each time slot. Dehghani~\etal\cite{dehghani2017stochastic} provide an $O(\log n)$-approximation for stochastic $k$-server in general metrics, where $n$ is the size of the distribution support.
Our result in Theorem~\ref{thm:sample:dom} does not imply a constant approximation for stochastic $k$-server. 
The difficulty is that if one maps the stochastic $k$-server problem to ours, the corresponding weight coefficient $\gamma$ is not necessarily a constant and our approximation factor is linear in $\gamma$. 
Obtaining a constant-factor approximation algorithm for stochastic $k$-server is still an interesting open problem.

Another related dynamic formulation of clustering problems is the fully dynamic model \cite{chan2018dynamic}.
In these dynamic clustering problems, there is an adversary with a \emph{hidden} sequence of operations.
At each time step $t$, the adversary inserts a point $x_t$ or deletes a point $x_t'$, based on the $t$-th operation in its sequence.
We are required to maintain a good approximate solution after each adversarial operation, such that small running time and space are achieved in the long run.
For classic $k$-clustering objectives such as $k$-center, $k$-median and $k$-means, constant-factor approximation algorithms under fully dynamic models are developed in \cite{chan2018dynamic,cohen2019dynamic}.

\subsection{Organization}

The remainder of this paper is organized as follows. 
In Section~\ref{section-median}, we give a polynomial-time approximation algorithm for $T$-\dom based on LP rounding and a network flow instance. 
In Section~\ref{section-outlier}, we first prove the hardness of approximation for $T$-\dks and $T$-\rdks when $T\geq3$, then provide a 3-approximation for 2-\dks; we show another hardness result on pure approximations for 2-\rdks, and complete the results with a multi-criteria $(3,1-\epsilon,1-\epsilon)$-approximation for 2-\rdks.
Finally, we list some future directions and open problems in Section~\ref{section:future}.


\section{A constant approximation for \dom}\label{section-median}

In this section, we devise an LP-based algorithm for \dom. The ordered objective is estimated using reduced cost functions as introduced in~\cite{byrka2018constant} 
(see Section \ref{section:lemma:proof} for the formal definitions).
At the center of our algorithm, we construct a network flow instance by applying a modified version of the filtering algorithm by Charikar and Li~\cite{charikar2012dependent}, and use an integral flow to induce the output solution. 
When $T\geq 3$, the integral flow may open more than $k$ facilities in each time step, thus we apply a crucial subroutine in Section \ref{section:network} to reroute some part of the flow and prune the extra open facilities.
We analyze the approximation factor by adapting the oblivious rounding analysis by Byrka~\etal\cite{byrka2018constant} in Section \ref{section:median:analysis}, and provide the missing proofs in Section \ref{section:lemma:proof}.

\subsection{LP relaxation}\label{section:relaxation}

We first give the LP relaxation. By adding a superscript to every variable to indicate the time step, denote $x_{ij}^\pt\in[0,1]$ the extent of connection between client $j$ and facility $i$, and $y_i^\pt\geq0$ the extent of opening facility location $i$ at time step $t$. Moreover, denote $z_{ii'}^\pt$ the fractional movement from facility $i$ to facility $i'$ between neighboring time steps $t$ and $t+1$. We use the cost reduction trick by Byrka~\etal\cite{byrka2018constant}. Call $d':X\times X\rightarrow R_{\geq0}$ a \emph{reduced cost function} (not necessarily a metric) of metric $d$, if for any $x,y\in X$, one has $d'(x,y)\geq0$, $d'(x,y)=d'(y,x)$, and $d(x_1,y_1)\leq d(x_2,y_2)\Rightarrow d'(x_1,y_1)\leq d'(x_2,y_2)$.
For a sequence of reduced cost functions $\mathfrak{D}=\{d^\pt\}_{t=1}^T$ of $d$, the relaxation is defined as follows.
\begin{alignat*}{3}
\text{min}\quad&&\sum_{t=1}^T\sum_{j\in C_t}\sum_{i\in F_t}d^\pt(i,j)x_{ij}^\pt&+\gamma\sum_{t=1}^{T-1}\sum_{i\in F_t}&&\sum_{i'\in F_{t+1}}d(i,i')z_{ii'}^\pt\nonumber\tag{$\mathrm{LP}(\mathfrak{D})$}\label{lp-median}\\
\text{s.t.}\quad&&\sum_{i\in F_t}x_{ij}^\pt&=1&&\forall j\in C_t,t\in[T]\\
&&\sum_{i\in F_t}y_i^\pt&=k&&\forall t\in[T]\\
&&0\leq x_{ij}^\pt&\leq y_i^\pt&&\forall i\in F_t,j\in C_t,t\in[T]\\
&&\sum_{i'\in F_{t+1}}z_{ii'}^\pt&=y_i^\pt&&\forall i\in F_t,t\in[T-1]\\
&&\sum_{i\in F_{t}}z_{ii'}^\pt&=y_{i'}^{(t+1)}.&&\forall i'\in F_{t+1},t\in[T-1]
\end{alignat*}

We solve \ref{lp-median} and obtain an optimal solution $(x,y,z)$, assuming that whenever $x_{ij}^\pt>0$, we have $x_{ij}^\pt=y_i^\pt$, via the standard duplication technique on facility locations (for example, see~\cite{charikar2012dependent}). Strictly speaking, whenever we split $i\in F_t$ into co-located copies, we also need to split the corresponding variables in $y^\pt$ and $z^{(t-1)},z^\pt$ in order for it to remain feasible to the LP relaxation. Here, we first split $y^\pt$ s.t. $x_{ij}^\pt\in\{0,y_i^\pt\}$, then arbitrarily split the related $z$ variables such that the last two constraints in \ref{lp-median} are still satisfied. Denote $\oball(j,R)=\{x\in X: d(x,j)<R\}$ the open ball centered at $j$ with radius $R$, and $E_j^\pt=\{i\in F_t:x_{ij}^\pt>0\}$ the relevant facilities for client $j$. Denote $\dav^\pt(j)=\sum_{i\in F_t}d(i,j)x_{ij}^\pt$ the average service cost of client $j$ and $y^\pt(S)=\sum_{i\in S}y_i^\pt$ the ``volume'' of fractional facilities in $S\subseteq F_t$, with respect to $y^\pt$. We perform a filter-and-match algorithm (see Algorithm \ref{algo:med:filter}) to obtain a subset $C_t'\subseteq C_t$ for each $t$, a so-called ``bundle'' $\calU_j^\pt\subseteq F_t$ for each $j\in C_t'$ and a partition $P_t$ of $C_t'$, where
\begin{itemize}
    \item $C_t'$ is a subset of ``well-separated" clients of $C_t$, and for any client $j'\in C_t\setminus C_t'$, there exists a relatively close client in $C_t'$, acting like a proxy for $j'$;
    \item $\calU_j^\pt$ is a subset of fractionally open facility locations that are relatively close to client $j\in C_t'$, and the bundles in $\{\calU_j^\pt:j\in C_t'\}$ are pair-wise disjoint;
    \item $P_t$ is a judiciously created partition of $C_t'$, where every subset is either a pair of clients (a.k.a. a \emph{normal pair}), or a single client (a.k.a. a \emph{singleton pair}). Each normal pair $\{j,j'\}$ in $P_t$ is chosen such that either $j$ or $j'$ is the nearest neighbor of the other in $C_t'$, and at least one facility is opened in $\calU_j^\pt\cup\calU_{j'}^\pt$.
\end{itemize}

\begin{algorithm}[hbt!]
\caption{FILTER\&MATCH}\label{algo:med:filter}
\DontPrintSemicolon
\SetKwInOut{Input}{Input}
\SetKwInOut{Output}{Output}

\Input{$(X,d),(x,y,z),\{C_t\}_{t=1}^T,\{F_t\}_{t=1}^T$}
\Output{filtered client subsets with a bundle for each filtered client and a partition on each subset}

\For{$t\in[T]$}{
$C_t'\leftarrow\emptyset,C_t''\leftarrow C_t$\;
\While(\tcp*[f]{filtering phase}){$C_t''$ is nonempty}{
choose $j\in C_t''$ s.t. $\dav^\pt(j)$ is minimized\;
$C_t'\leftarrow C_t'\cup\{j\},C_t''\leftarrow C_t''\setminus\{j\}$, delete each $j'\in C_t''$ s.t. $d(j,j')\leq 4\dav^\pt(j')$\;}
\For{$j\in C_t'$}{
$n_j^\pt\leftarrow \argmin_{j'\in C_t',j\neq j'}d(j,j')$, $R_j^\pt\leftarrow \frac{1}{2}d(j,n_j^\pt)$, $\calU_j^\pt\leftarrow E_j^\pt\cap \oball(j,R_j^\pt)$\;}
$P_t\leftarrow\emptyset, C_t''\leftarrow C_t'$\;
\While(\tcp*[f]{matching phase}){$\exists j\in C_t''$ s.t. $n_j^\pt\in C_t''$}{
choose such $j\in C_t''$ s.t. $d(j,n_j^\pt)$ is minimized\;
$P_t\leftarrow P_t\cup\{(j,n_j^\pt)\}$, $C_t''\leftarrow C_t''\setminus\{j,n_j^\pt\}$\;}
\For{$j\in C_t''$}{
$P_t\leftarrow P_t\cup\{(j)\}$, $C_t''\leftarrow C_t''\setminus\{j\}$\;}}
\Return $\{C_t'\}_{t=1}^T,\,\{\calU_j^\pt:j\in C_t'\}_{t=1}^T,\,\{P_t\}_{t=1}^T$\;
\end{algorithm}

We first provide some basic properties of $C_t'$ and $\calU_j^\pt$ (also see the results in~\cite{charikar2012dependent}).
\begin{lemma}\label{lemma:med:filter}(Charikar and Li~\cite{charikar2012dependent}). Fix $t\in[T]$. The following statements hold. 
\begin{enumerate}[(1)]
    \item For any $j,j'\in C_t',\,j\neq j'$, one has $d(j,j')>4\max\{\dav^\pt(j),\dav^\pt(j')\}$.
    \item For any $j'\in C_t\setminus C_t'$, there exists $j\in C_t'$ s.t. $\dav^\pt(j)\leq\dav^\pt(j'),\,d(j,j')\leq4\dav^\pt(j')$.
    \item For any $j\in C_t'$, $1/2\leq y^\pt(\calU_j^\pt)\leq1$.
    \item For any $j,j'\in C_t',\,j\neq j'$, $\calU_j^\pt\cap\calU_{j'}^\pt=\emptyset$.
\end{enumerate}
\end{lemma}
\begin{proof}
The first two assertions are obvious given the order we filter the clients in Algorithm \ref{algo:med:filter}. For the third one, we have $y^\pt(\calU_j^\pt)\leq y^\pt(E_j^\pt)=1$, since we assume $x_{ij}^\pt=y_i^\pt$ whenever $i\in E_j^\pt$. For the other inequality, assume otherwise and we obtain $y^\pt(E_j^\pt\setminus\oball(j,R_j^\pt))=1-y^\pt(\calU_j^\pt)>1/2$. One also has $R_j^\pt=0.5d(j,n_j^\pt)>2\max\{\dav^\pt(j),\dav^\pt(n_j^\pt)\}\geq2\dav^\pt(j)$ because $j$ and $n_j^\pt$ are both in $C_t'$. This puts the average service cost of $j$ at least
\[\dav^\pt(j)\geq\sum_{i\in E_j^\pt\setminus\oball(j,R_j^\pt)}y_i^\pt d(i,j)\geq R_j^\pt\cdot  y^\pt(E_j^\pt\setminus\oball(j,R_j^\pt))>\dav^\pt(j),\]
which is a contradiction. For the last proposition, we simply notice that $\calU_j^\pt\subseteq\oball(j,R_j^\pt)$ and $\calU_{j'}^\pt\subseteq\oball(j',R_{j'}^\pt)$. Since the sum of two radii is at most $R_j^\pt+R_{j'}^\pt=0.5(d(j,n_j^\pt)+d(j',n_{j'}^\pt))\leq d(j,j')$, the two open balls must be disjoint.
\end{proof}

On the partition $P_t$, we discuss the differences between our Algorithm \ref{algo:med:filter} and the algorithm used by Charikar and Li~\cite{charikar2012dependent}.
In each time step $t$, Charikar and Li use a simple greedy algorithm on the filtered client set $C_t'$, matching the closest unmatched pair in $C_t'$ whenever possible, thus leaving at most one client in $C_t'$ unmatched. 
In our algorithm, we first compute for every $j\in C_t'$ its nearest neighbor $n_j^\pt\in C_t'$; 
whenever there exists an unmatched $j$ such that $n_j^\pt$ is also unmatched, we choose $j$ which minimizes $d(j,n_j^\pt)$ among such choices and match the two. 
Notice that this process may leave an arbitrary number of clients in $C_t'$ unmatched and they all end up in singleton pairs. 

The motivation in our algorithm to leave many clients unmatched, is that by restricting them to singleton pairs, the additional rerouting cost incurred in the post-processing phase (see the next section) is much easier to bound. In fact, we are not certain whether the original matching in~\cite{charikar2012dependent} can result in a good approximate solution under our rounding framework. It is also worth noting that, while we define the objective of \ref{lp-median} using reduced cost functions in $\mathfrak{D}$ that simulate the ordered service cost, Algorithm \ref{algo:med:filter} is completely oblivious of them and only uses the underlying metric $d$. 

\subsection{Flow-based LP rounding}
\label{section:network}

We construct an instance of network flow $\network$, and embed the LP solution as a fractional flow $\fracflow$. The network $\network$ consists of a source $\source$, a sink $\sink$ and $6T$ intermediate layers $L_1,L_2,\ldots,L_{6T}$ arranged in a linear fashion. For each time step $t\in[T]$, we create \emph{two} nodes for every pair $p\in P_t$, every bundle $\calU_j^\pt$ and every candidate facility location $i\in F_t$. All these nodes are contained in the layers $L_{6t-5},\ldots,L_{6t}$. To distinguish between the two mirror nodes, we use $\lnode(\cdot)$ and $\rnode(\cdot)$ to represent the nodes in $\{L_{6t-5},L_{6t-4},L_{6t-3}\}$ (on the left) and the nodes in $\{L_{6t-2},L_{6t-1},L_{6t}\}$ (on the right), respectively. The network is constructed as follows, and an illustration is given in Fig. \ref{figure:network}.
 
\begin{enumerate}[Step 1.]
\item For each $i\in F_t,\,t\in[T]$, add nodes $\lnode(i)$ to $L_{6t-5}$ and $\rnode(i)$ to $L_{6t}$.
\item For each $\calU_j^\pt,\,t\in[T]$, add nodes $\lnode(\calU_j^\pt)$ to $L_{6t-4}$ and $\rnode(\calU_j^\pt)$ to $L_{6t-1}$.
\item For each $p\in P_t,\,t\in[T]$, add nodes $\lnode(p)$ to $L_{6t-3}$ and $\rnode(p)$ to $L_{6t-2}$.
\item For each $j\in C_t',\,p\in P_t$ s.t. $j\in p,\,t\in[T]$, add links $(\lnode(\calU_j^\pt),\lnode(p)),(\rnode(p),\rnode(\calU_j^\pt))$ both with capacity range $\left[\floor{ y^\pt(\calU_{j}^\pt)},\ceil{y^\pt(\calU_{j}^\pt)}\right]$ and initial fractional flow values $\fracflow(\lnode(\calU_j^\pt),\lnode(p))=\fracflow(\rnode(p),\rnode(\calU_j^\pt))=y^\pt(\calU_{j}^\pt)$. Note that the capacity is either $[0,1]$ or $\{1\}$.
\item For each $p\in P_t,\,t\in[T]$, add the link $(\lnode(p),\rnode(p))$ with capacity $[\floor{y^\pt(p)},\ceil{y^\pt(p)}]$ and define its flow as $\fracflow(\lnode(p),\rnode(p))=y^\pt(p)=\sum_{j\in p}y^\pt(\calU_j^\pt)$. 
According to Lemma \ref{lemma:med:filter}, if $p$ is a normal pair, the capacity is either $[1,2]$ or $\{1\}$ or $\{2\}$, since $y^\pt(p)\geq2\times(1/2)=1$; if $p$ is a singleton pair, the capacity is either $[0,1]$ or $\{1\}$.
\item For each $j\in C_t'$ and $i\in \calU_j^\pt,\,t\in[T]$, add links $(\lnode(i),\lnode(\calU_j^\pt)),(\rnode(\calU_j^\pt),\rnode(i))$ with unit capacity $[0,1]$. Let the initial fractional flows be $\fracflow(\lnode(i),\lnode(\calU_j^\pt))=\fracflow(\rnode(\calU_j^\pt),\rnode(i))=y_i^\pt$.
\item For each $i\in F_t\setminus(\bigcup_{j\in C_t'}\calU_j^\pt),\,t\in[T]$, add the link $(\lnode(i),\rnode(i))$ with capacity $[\floor{y_i^\pt},\ceil{y_i^\pt}]$ (\emph{across intermediate layers $L_{6t-4},\ldots,L_{6t-1}$}). Let its initial fractional flow be $\fracflow(\lnode(i),\rnode(i))=y_i^\pt$. Note that the flow may be larger than one, since we allow multi-sets and do \emph{not} impose the constraint $y_i^\pt\leq1$ in \ref{lp-median}.
\item For each $i\in F_t,i'\in F_{t+1},\,t\in[T-1]$, add the link $(\rnode(i),\lnode(i'))$ with capacity $[\floor{z_{ii'}^\pt},\ceil{z_{ii'}^\pt}]$. Let its initial fractional flow be $\fracflow(\rnode(i),\lnode(i'))=z_{ii'}^\pt$.
\item For each $i\in F_1$, add the link $(\source,\lnode(i))$ with capacity $[\floor{y_i^\pone},\ceil{y_i^\pone}]$ and initial flow $\fracflow(\source,\lnode(i))=y_i^\pone$. For each $i'\in F_T$, add the link $(\rnode(i'),\sink)$ with capacity $[\floor{y_{i'}^{(T)}},\ceil{y_{i'}^{(T)}}]$ and initial flow $\fracflow(\rnode(i'),\sink)=y_{i'}^{(T)}$.
\end{enumerate}

\begin{figure}[hbt!]
    \centering
    \includegraphics{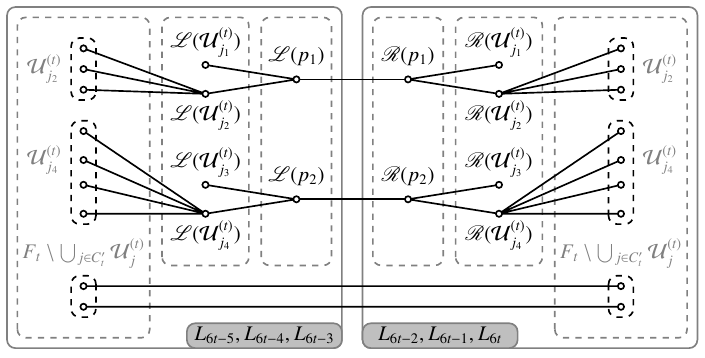}
    \caption{Several intermediate layers of $\network$ representing a single time step $t$. Some nodes and links are left out for simplicity.}
    \label{figure:network}
\end{figure}

Notice $\fracflow$ is naturally a flow with value $k$, since the flow conservation constraints are directly satisfied by the last two constraints of \ref{lp-median}. 
Because the flow polytope is defined by a totally unimodular matrix, and our capacity constraints are all integers, it is well-known (see, e.g., \cite{schrijver2003combinatorial}) that the flow polytope has integral extreme point solutions.
Using the dependent rounding algorithm by Kumar~\etal\cite{kumar2009unified}, one can efficiently sample an integral flow $\intflow$ corresponding to an integral extreme point solution,
such that $\intflow$ is guaranteed to have value $k$, and $\E[\intflow]=\fracflow$ holds for all links (cf. \cite{hajiaghayi2016constant}).
Next, given the integral flow $\intflow$, we deterministically construct the solution $\{A_t\}_{t=1}^T$ as follows,
\begin{itemize}
	\item If $T=2$, there are 12 layers $L_1,L_2,\ldots,L_{12}$ in the network. For each link $e=(\rnode(i_1),\lnode(i_2))$ between $L_6$ and $L_7$ such that $\intflow(e)=m\geq1$, add $m$ copies of $i_1$ to $A_1$ and $m$ copies of $i_2$ to $A_2$.
	\item If $T\geq 3$, $\intflow$ does not immediately reveal a feasible solution. To see this, focus on any unit flow in $\intflow$. It may enter $L_7$ and exit from $L_{12}$ (both for $t=2$) through nodes that represent different facility locations. We design the following Algorithm \ref{algo:reroute} to resolve this issue.
	 
	For a facility $i\in F_t$, if there is (at least) one unit of flow through $\lnode(i)$ or $\rnode(i)$, we call the facility $i$ \emph{left-activated} or \emph{right-activated} correspondingly. The algorithm looks at each pair $(j_1,j_2)=p\in P_t$ independently, and considers the 1 or 2 units of flow $\intflow$ on the link $(\lnode(p),\rnode(p))$. There are two cases: The first one is when $\intflow(\lnode(p),\rnode(p))=2$. Since the links $(\lnode(\calU_{j_1}^\pt),\lnode(p))$ and $(\lnode(\calU_{j_2}^\pt),\lnode(p))$ both have capacities at most one, one must also have $\intflow(\lnode(\calU_{j_1}^\pt),\lnode(p))=\intflow(\lnode(\calU_{j_2}^\pt),\lnode(p))=1$. This happens for the nodes on the right (i.e., in layers $L_{6t-2},L_{6t-1},L_{6t}$) as well. In this case, we simply ignore the activated facilities in $L_{6t}$, and add the activated facilities in $L_{6t-5}$ to $A_t$; the other case is when $\intflow(\lnode(p),\rnode(p))=1$. Now, the two activated facilities in $L_{6t-5}$ and $L_{6t}$ may not even be in the same bundle. Here, we exploit the crucial property of $p$ that either $j_1$ or $j_2$ is the nearest neighbor of the other in $C_t'$, and are able to obtain a simple criterion of choosing whether the left-activated or right-activated facility location. The cases of singleton pairs and facility locations in $F_t\setminus\bigcup_{j\in C_t'}\calU_j^\pt$ are similar and much easier to handle, since the integral flow is less ambiguous for them. 
\end{itemize}

\begin{algorithm}[hbt!]
\caption{REROUTE}\label{algo:reroute}
\DontPrintSemicolon
\SetKwInOut{Input}{Input}
\SetKwInOut{Output}{Output}

\Input{$\network,\intflow,\{C_t'\}_{t=1}^T,\,\{\calU_j^\pt:j\in C_t'\}_{t=1}^T,\,\{P_t\}_{t=1}^T$}
\Output{a feasible solution to the original $T$-\dom instance}
\For{$t\in[T]$}{
$A_t\leftarrow\emptyset$\;
\For{$p\in P_t$}{
\uIf{$\intflow(\lnode(p),\rnode(p))=2$}{
pick the left-activated $i_1,i_2$, $A_t\leftarrow A_t\cup\{i_1,i_2\}$\;}
\ElseIf{$\intflow(\lnode(p),\rnode(p))=1$}{
\uIf(\tcp*[f]{$j_1$ and $j_2$ are closest to each other}){$p=(j_1,j_2)$, $n_{j_1}^\pt=j_2,n_{j_2}^\pt=j_1$}{
pick the left-activated $i$, $A_t\leftarrow A_t\cup\{i\}$\;}
\uElseIf{$p=(j_1,j_2)$, $n_{j_1}^\pt=j_2,n_{j_2}^\pt\neq j_1$}{
\uIf{$\intflow$ passes through \emph{the same bundle} in $L_{6t-5}$ and $L_{6t}$}{
pick the left-activated $i$, $A_t\leftarrow A_t\cup\{i\}$\;}
\Else{
pick the (left or right) activated $i\in\calU_{j_2}^\pt$, $A_t\leftarrow A_t\cup\{i\}$\;}}
\ElseIf{$p=(j)$}{
pick the left-activated $i$, $A_t\leftarrow A_t\cup\{i\}$\;}
}}
\For{$i\in F_t\setminus\bigcup_{j\in C_t'}\calU_j^\pt$}{
\If{$\intflow(\lnode(i),\rnode(i))=m\geq1$}{
add $m$ copies of $i$ to $A_t$\;}}}
\Return $\{A_t\}_{t=1}^T$\;
\end{algorithm}

\begin{figure}[hbt!]
    \centering
    \includegraphics{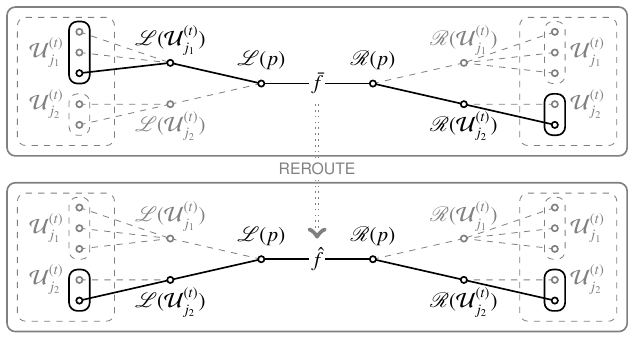}
    \caption{An illustration of Algorithm \ref{algo:reroute}, where $p=(j_1,j_2),\,n_{j_1}^\pt=j_2$ and $n_{j_2}^\pt\neq j_1$. We highlight the unit flow on $(\lnode(p),\rnode(p))$ before and after rerouting using solid links, and use dashed lines to represent links with no flow on them. The algorithm resolves the ambiguity in $\intflow$ by choosing the right-activated facility location in $\calU_{j_2}^\pt$, which is equivalent to rerouting the flow through its mirror node on the left in $L_{6t-5}$.}
    \label{figure:reroute}
\end{figure}

One may also regard Algorithm \ref{algo:reroute} as implicitly rerouting some flows of $\intflow$ to make it \emph{symmetric} on the six layers $L_{6t-5},L_{6t-4},\dots,L_{6t}$, for every $t\in[T]$ (see Fig. \ref{figure:reroute}). This results in another integral flow that has marginal distributions different from $\intflow$, in particular the marginal distributions of $A_t,\,t\in[T]$. We first estimate the movement cost of the solution $\{A_t\}_{t=1}^T$. To this end, we have the following lemma.

\begin{lemma}\label{translemma}
Recall that $m(A,A')$ is the minimum total distance of perfect matching between $A$ and $A'$. If $T=2$, the expected movement cost of solution $\{A_1,A_2\}$ satisfies
\[\E[m(A_1,A_2)]\leq\sum_{i\in F_1}\sum_{i'\in F_2}d(i,i')z_{ii'}^{(1)}.\]

If $T\geq3$, the expected movement cost of solution $\{A_t\}_{t=1}^T$ satisfies
\[\E\left[\sum_{t\in[T-1]}m(A_t,A_{t+1})\right]\leq\sum_{t\in[T-1]}\sum_{i\in F_t}\sum_{i'\in F_{t+1}}d(i,i')z_{ii'}^\pt+6\sum_{t\in[T]}\sum_{j\in C_t}\dav^\pt(j).\]
\end{lemma}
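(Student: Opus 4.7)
The plan is to handle the two cases separately. The $T=2$ case is essentially a bookkeeping exercise in linearity of expectation, while the $T\geq 3$ case requires additional charging to absorb the cost of the rerouting step described before the lemma.

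For $T=2$, note that the only edges in $\network$ that cross between the two time steps are the inter-time links $(\ight(i),\eft(i'))$ for $i\in F_1,i'\in F_2$, with fractional value $\fracflow(\ight(i),\eft(i'))=z_{ii'}^{(1)}$. After rounding, every activated facility location on either side of this cut carries exactly one unit of integral flow, so the set of edges with $\intflow=1$ gives a natural perfect matching between $A_1$ and $A_2$. This matching is a feasible matching, so its cost upper bounds the minimum-weight matching $d(A_1,A_2)$. Taking expectation and using $\E[\intflow(e)]=\fracflow(e)$ (which holds because the rounding is a standard dependent rounding on a totally unimodular constraint system) immediately yields the stated equality (interpreted as an upper bound, which is what the subsequent analysis needs).

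For $T\geq3$ I would decompose the expected cost into two terms, a \emph{base term} and a \emph{rerouting overhead}. The base term arises exactly as in the $T=2$ argument applied to each consecutive cut: for each time step $t$, the edges $(\ight(i),\eft(i'))$ crossing from the $L_{6t}$ side to the $L_{6(t+1)-5}$ side carry integral flow whose expected weighted sum is $\sum_{i,i'} d(i,i')z_{ii'}^{\pt}$. This would be the matching cost between $A_{t,2}$ and $A_{t+1,1}$ if no rerouting were needed. The rerouting overhead accounts for the fact that we actually open $A_t\subseteq A_{t,1}\cup A_{t,2}$ of size $k$, and so the matched partner of some facility on one side may need to be replaced by a different facility on the other side of time $t$.

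To bound the rerouting overhead I would charge locally, pair by pair. For each $p=\{j_1,j_2\}\in P_t$ on which a replacement occurs, the swap happens between some $i_1\in\calU_{j_1}^\pt$ and some $i_2\in\calU_{j_1}^\pt\cup\calU_{j_2}^\pt$, and the extra edge length introduced into the matching with the adjacent time step is at most $d(i_1,i_2)$. Applying the triangle inequality through $j_1$ and $j_2$, together with the bundle guarantee that every facility in $\calU_j^\pt$ lies within a small multiple of $\dav^\pt(j)$ and the pair-selection guarantee that $d(j_1,j_2)$ is controlled by $\max\{\dav^\pt(j_1),\dav^\pt(j_2)\}$, one obtains a per-pair overhead of the form $O(\dav^\pt(j_1)+\dav^\pt(j_2))$. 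Each pair can affect at most the two incident time-step movements, so summing over all $t$ and $p$ and absorbing constants yields the $6\sum_t\sum_{j\in C_t}\dav^\pt(j)$ term.

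The main obstacle I expect is pinning down the exact constant $6$: it requires carefully tracking (i) the constants coming from the bundle radius in the filtering step, (ii) the constants from the pair-selection rule in $P_t$, and (iii) the fact that each pair may be charged twice, once for the incoming movement and once for the outgoing movement. Once the rerouting algorithm in Appendix~\ref{app:reroute} is made explicit, the combinatorial structure of each replacement is simple enough that the charging is straightforward, but bookkeeping the constants to match the claimed $6$ will be where the care is needed. After that, combining the base term with the overhead bound and applying linearity of expectation finishes the proof.
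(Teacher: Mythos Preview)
Your $T=2$ argument and the overall decomposition for $T\geq 3$ into a ``base term'' (the $z$-edges across consecutive time steps) plus a ``rerouting overhead'' are both in line with the paper. The gap is in how you bound the rerouting overhead.

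You write that ``every facility in $\calU_j^\pt$ lies within a small multiple of $\dav^\pt(j)$'' and that ``$d(j_1,j_2)$ is controlled by $\max\{\dav^\pt(j_1),\dav^\pt(j_2)\}$.'' Both of these are false in the direction you need. The filtering step guarantees a \emph{lower} bound $d(j_1,j_2)\geq 4\max\{\dav^\pt(j_1),\dav^\pt(j_2)\}$ for filtered clients $j_1,j_2\in C_t'$, not an upper bound; and facilities in $\calU_j^\pt$ lie within radius $R_j=\tfrac12 d(j,n_j^\pt)$, which is likewise not bounded by $\dav^\pt(j)$. So a deterministic per-swap bound of $O(\dav(j_1)+\dav(j_2))$ cannot be obtained the way you suggest; the cross-bundle distance $d(j_1,j_2)$ can be arbitrarily large compared to the $\dav$ values.

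The paper's fix is a probability-weighted argument. Within a pair $p=(j_1,j_2)$, the cross-bundle contribution appears only when the unit flow through $p$ enters and exits through different bundles, and this ``disagreement'' event has probability at most $2(1-y(\calU_{j_1}))$. The key Markov-type inequality is that since $\calU_{j_1}=E_{j_1}\cap\oball(j_1,R_{j_1})$ with $R_{j_1}=\tfrac12 d(j_1,j_2)$, one has $\dav(j_1)\geq(1-y(\calU_{j_1}))R_{j_1}$, hence $(1-y(\calU_{j_1}))\,d(j_1,j_2)\leq 2\dav(j_1)$. Combining this with the (expectation-only) bound $\sum_{i\in\calU_{j_\ell}}y_i\,d(i,j_\ell)\leq\dav(j_\ell)$ for the within-bundle pieces gives $\E[\Delta_{t,p}]\leq 6\dav(j_1)+2\dav(j_2)$, and summing over $p\in P_t$ and $t$ yields the factor $6$. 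Without this probability-times-distance step, your charging does not go through.
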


\begin{proof}
If $T=2$, recall that we choose $A_1,A_2$ solely based on $\intflow$ and the links between $L_6$ and $L_7$, thus the matching cost of $A_1,A_2$ is obviously at most $\sum_{i\in F_1,i'\in F_2}\intflow(\rnode(i),\lnode(i'))d(i,i')$. Because $\E[\intflow]=\fracflow$, each flow on link $e=(\rnode(i),\lnode(i'))$ between $L_6$ and $L_7$ has expectation $z_{ii'}^\pone$. Hence the expectation of total movement cost is at most,
	\[\E[m(A_1,A_2)]\leq\sum_{i\in F_1,i'\in F_2}\E[\intflow(\rnode(i),\lnode(i'))]\cdot d(i,i')=\sum_{i\in F_1}\sum_{i'\in F_2}d(i,i')z_{ii'}^{(1)}.\]
	
Now we consider the case where $T\geq 3$. Denote $\intflow(L)$ the multi-set of facilities that $\intflow$ activates in layer $L$ (where $L$ has an index of $6t-5$ or $6t$, $t\in[T]$). 
According to Algorithm \ref{algo:reroute}, $|\intflow(L)|=k$ for each $L\in\{L_{6t-5},L_{6t}\}$, and we define $m_{P_t}(\intflow(L_{6t-5}),\intflow(L_{6t}))$ the total distance of a perfect matching between $\intflow(L_{6t-5})$ and $\intflow(L_{6t})$, \emph{restricted to} the matchings within each pair $p\in P_t$ (note that this does not affect $F_t\setminus\bigcup_{j\in C_t'}\calU_j^\pt$, because $\intflow$ is naturally consistent for them). 
Using the triangle inequality, one has
\begin{align}
\E\left[\sum_{t\in[T-1]}m(A_t,A_{t+1})\right]&\leq
\E\left[\sum_{t\in[T-1]}m(A_t,\intflow(L_{6t}))+m(\intflow(L_{6t}),\intflow(L_{6t+1}))+m(\intflow(L_{6t+1}),A_{t+1})
\right]\notag\\
&\leq\sum_{t\in[T]}\E\left[m(A_t,\intflow(L_{6t}))+m(\intflow(L_{6t-5}),A_t)\right]+\sum_{t\in[T-1]}\E\left[m(\intflow(L_{6t}),\intflow(L_{6t+1}))\right]\notag\\
&\leq\sum_{t\in[T]}\E\left[m(A_t,\intflow(L_{6t}))+m(\intflow(L_{6t-5}),A_t)\right]+\sum_{t\in[T-1]}\sum_{i\in F_t}\sum_{i'\in F_{t+1}}d(i,i')z_{ii'}^\pt\notag\\
&\leq\sum_{t\in[T]}\E\left[m_{P_t}(\intflow(L_{6t-5}),\intflow(L_{6t}))\right]+\sum_{t\in[T-1]}\sum_{i\in F_t}\sum_{i'\in F_{t+1}}d(i,i')z_{ii'}^\pt,\label{eq:triangle:inequality:movement}
\end{align}
where the second to last inequality follows from the same analysis as the case when $T=2$. To see the last inequality, notice that by definition of Algorithm \ref{algo:reroute}, we have $A_t\subseteq\intflow(L_{6t-5})\cup\intflow(L_{6t})$ and \[m(A_t,\intflow(L_{6t}))+m(\intflow(L_{6t-5}),A_t)\leq m_{P_t}(A_t,\intflow(L_{6t}))+m_{P_t}(\intflow(L_{6t-5}),A_t),\]
when the matchings are also restricted within pairs of $P_t$. Then consider each pair $p\in P_t$ separately. When $\intflow(\lnode(p),\rnode(p))=2$, $A_t$ always agrees with the left-activated facilities in $\intflow(L_{6t-5})$; when $\intflow(\lnode(p),\rnode(p))=1$, $A_t$ must agree with an activated facility either in $\intflow(L_{6t-5})$ or in $\intflow(L_{6t})$. Summing over all pairs, one has the following equality, which verifies~\eqref{eq:triangle:inequality:movement},
\[m_{P_t}(A_t,\intflow(L_{6t}))+m_{P_t}(\intflow(L_{6t-5}),A_t)=m_{P_t}(\intflow(L_{6t-5}),\intflow(L_{6t})).\]
 
In what follows, we fix $t$ and leave out the superscript for convenience. Denote the random variable $\Delta_t= m_{P_t}(\intflow(L_{6t-5}),\intflow(L_{6t}))$ and $\Delta_{t,p}$ the partial matching cost within pair $p\in P_t$, thus $\Delta_t=\sum_{p\in P_t}\Delta_{t,p}$. To further obtain an upper bound on $\Delta_{t,p}$, we prioritize the facilities in the same bundle, and only make a cross-bundle matching if we have to. For example for $p=(j_1,j_2)$, if $\intflow(\lnode(p),\rnode(p))=2$, with left-activated facilities $i_1\in\calU_{j_1},i_2\in\calU_{j_2}$ and right-activated ones $i_1'\in\calU_{j_1},i_2'\in\calU_{j_2}$, we (possibly suboptimally) match $(i_1,i_1')$ and $(i_2,i_2')$. We only consider cross-bundle pairs when the unit flow on the link $(\lnode(p),\rnode(p))$ passes through different bundles in $L_{6t-5}$ and $L_{6t}$, in which case we have $d(i_1,i_2')\leq d(i_1,j_1)+d(j_1,j_2)+d(j_2,i_2')$ and pay the cost $d(j_1,j_2)$. It is then easy to obtain the following using triangle inequality (we omit the singleton case here because it is obviously easier),
\begin{align}
\E[\Delta_{t,p}]\leq&\sum_{i_1,i_1'\in\calU_{j_1}}\Pr[\intflow\textrm{ left-activates }i_1,\textrm{ right-activates } i_1']d(i_1,i_1')\notag\\
&+\sum_{i_2,i_2'\in\calU_{j_2}}\Pr[\intflow\textrm{ left-activates }i_2,\textrm{ right-activates } i_2']d(i_2,i_2')\notag\\
&+\sum_{i_1\in \calU_{j_1},i_2\in\calU_{j_2}}\Pr[\intflow(\lnode(p),\rnode(p))=1,\intflow\textrm{ activates }i_1,i_2]d(i_1,i_2)\notag\\
\leq&\sum_{i\in\calU_{j_1}}d(i,j_1)\cdot\Pr[\intflow(\lnode(i),\lnode(\calU_{j_1}))=1]+d(i,j_1)\cdot\Pr[\intflow(\rnode(\calU_{j_1}),\rnode(i))=1]\notag\\
&+\sum_{i\in\calU_{j_2}}d(i,j_2)\cdot\Pr[\intflow(\lnode(i),\lnode(\calU_{j_2}))=1]+d(i,j_2)\cdot\Pr[\intflow(\rnode(\calU_{j_2}),\rnode(i))=1]\notag\\
&+\Pr[\intflow(\lnode(p),\rnode(p))=1,L_{6t-5},L_{6t}\textrm{ disagree on }\calU_{j_1},\calU_{j_2}]\cdot d(j_1,j_2)\notag\\
\leq&\,2\dav(j_1)+2\dav(j_2)+\Pr[\intflow(\lnode(p),\rnode(p))=1,L_{6t-5},L_{6t}\textrm{ disagree on }\calU_{j_1},\calU_{j_2}]\cdot d(j_1,j_2).\label{eq:disagree}
\end{align}
	
Since $(j_1,j_2)$ is a pair, w.l.o.g. let $j_2$ be the nearest neighbor of $j_1$ among $C_t'$, $R_{j_1}=0.5d(j_1,j_2)$ and $\dav(j_1)\geq(1-y(\calU_{j_1}))\cdot R_{j_1}=0.5(1-y(\calU_{j_1}))\cdot d(j_1,j_2)$, hence $(1-y(\calU_{j_1}))d(j_1,j_2)\leq 2\dav(j_1)$. Furthermore, we can obtain a simple bound on the probability in~\eqref{eq:disagree},
\begin{align*}
&\Pr[\intflow(\lnode(p),\rnode(p))=1,L_{6t-5},L_{6t}\textrm{ disagree on }\calU_{j_1},\calU_{j_2}]\\
=&\Pr[\intflow(\lnode(\calU_{j_1}),\lnode(p))=1,\intflow(\lnode(p),\rnode(p))=1,\intflow(\rnode(p),\rnode(\calU_{j_2}))=1]\\
&+\Pr[\intflow(\lnode(\calU_{j_2}),\lnode(p))=1,\intflow(\lnode(p),\rnode(p))=1,\intflow(\rnode(p),\rnode(\calU_{j_1}))=1]\\
\leq&\Pr[\intflow(\rnode(p),\rnode(\calU_{j_1}))=0]+\Pr[\intflow(\lnode(\calU_{j_1}),\lnode(p))=0]\\
\leq&\,2(1-y(\calU_{j_1})).
\end{align*}
	
Therefore, the expectation above can be further bounded as
\[\E[\Delta_{t,p}]\leq2\dav(j_1)+2\dav(j_2)+2d(j_1,j_2)\cdot (1-y(\calU_{j_1}))\leq 6\dav(j_1)+2\dav(j_2).\]
	
Notice that for each $t$, $P_t$ is a partition of $C_t'\subseteq C_t$. By summing over all pairs and all time steps, we have
\[\sum_{t\in[T]}\E[\Delta_t]\leq6\sum_{t\in [T]}\sum_{j\in C_t'}\dav^\pt(j)\leq 6\sum_{t\in [T]}\sum_{j\in C_t}\dav^\pt(j),\]
which yields the lemma when combined with~\eqref{eq:triangle:inequality:movement}.
\end{proof}

\subsection{Analysis}
\label{section:median:analysis}

For the solution $\{A_t:t\in[T]\}$ given by Algorithm \ref{algo:reroute}, recall that the service cost vector at time $t$ is defined as $d(C_t,A_t)=(d(j,A_t))_{j\in C_t}$. We first provide a lemma that bounds the expectation of $\topl$ norm of the service cost vector at time $t$, for any $\ell\in[|C_t|]$, where the $\topl$ norm of a non-negative vector is the sum of its largest $\ell$ entries. Consequently, the more general ordered cost can be written as a conic combination of $\topl$ norms, and easily bounded. A simpler version is presented as Lemma~4.7 in~\cite{byrka2018constant}, which is the result of dependent rounding by Charikar and Li~\cite{charikar2012dependent}. To obtain our following lemma, we need to further examine the rerouting procedures in Algorithm \ref{algo:reroute}, and consider the new marginal distributions of $A_t,t\in[T]$.

\begin{lemma}\label{budgetlemma}(Adapted from~\cite{byrka2018constant}).
Fix $t\in[T]$ and let $\ell\in[|C_t|],h>0$ be arbitrary. Define $d_{-h}(j,j')=0$ if $d(j,j')<h$ and $d_{-h}(j,j')=d(j,j')$ otherwise. One has
\[\E[\topl(d(C_t,A_t))]\leq 41.33\ell\cdot h+41.33\sum_{j\in C_t}\sum_{i\in F_t}d_{-h}(i,j)x_{ij}^\pt.\]
\end{lemma}

We prove Lemma \ref{budgetlemma} in Section \ref{section:lemma:proof}. For now, we turn to the case with general weights. We use an argument by Byrka~\etal\cite{byrka2018constant}. Suppose all pair-wise distances are distinct (via small perturbation) and the weight vector $w_t$ has $N_t$ distinct entries $\{\bar w_{tr}:r=1,\dots,N_t\}$ in decreasing order. For each distinct entry, we guess the exact value $T_r^\pt$ which is the smallest distance that is multiplied with $\bar w_{tr}$ in some fixed optimum. Also let $T_0^\pt=\infty$ and $T_{N_t+1}^\pt=0$. Now, the $t$-th reduced cost function in \ref{lp-median} is defined as $d^\pt(i,j)=d(i,j)\bar w_{ts}$, where $T_s^\pt\leq d(i,j)< T_{s-1}^\pt$. After solving the corresponding relaxation \ref{lp-median} and using Algorithm \ref{algo:reroute} to obtain the solution $\{A_t:t\in[T]\}$, we have the following lemma. The proof is in spirit similar to Lemma~5.1 in~\cite{byrka2018constant}, which we provide in Section \ref{section:lemma:proof}.

\begin{lemma}\label{corelem}
When $T=2$, $\{A_t:t\in[T]\}$ is an $82.66$-approximation for 2-\dom. If $T\geq 3$ is a constant and the smallest entry in $\{w_t\}_{t=1}^T$ is at least some constant $\epsilon>0$, $\{A_t:t\in[T]\}$ is an $(82.66+6\gamma/\epsilon)$-approximation for $T$-\dom. In both cases, the algorithm runs in $\prod_{t=1}^T\left(|F_t|\cdot|C_t|\right)^{O(N_t)}$ time, where $N_t$ is the number of distinct entries in the weight vector $w_t$, $t\in[T]$.
\end{lemma}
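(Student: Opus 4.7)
The plan is to combine Lemma~\ref{budgetlemma} (per-time-step ordered service cost) with Lemma~\ref{translemma} (movement cost) inside the reduced-cost LP framework of Byrka \etal\cite{byrka2018constant}, following the same high-level strategy as their Lemma 5.1 but adapted to our network-flow rounding and the temporal movement term. I would proceed in three stages.

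\emph{Stage 1 (weight decomposition and enumeration).} Since each $w_t$ is non-increasing, I would write $w_t = \sum_{m \in M_t} c_{t,m}\,\rect(m,|C_t|)$ with $c_{t,m}\geq 0$ and $|M_t|=N_t$. For every piece, enumerate a threshold $h_{t,m}$ guessing the $m$-th largest service distance at time $t$ in $\OPT$. Each threshold takes at most $|F_t|\cdot|C_t|$ distinct values, so the total number of enumerations is $O\bigl(\prod_t (|F_t|\cdot|C_t|)^{N_t}\bigr)$, matching the subroutine bound. I would then assemble $d^t$ from these thresholds in the manner of Byrka \etal\ so that, for the correct guess, the optimum of $\LP(\mathfrak{D})$ is at most $\OPT$.

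\emph{Stage 2 (service cost).} For each enumerated guess I would solve $\LP(\mathfrak{D})$, run the flow-rounding procedure of Section~\ref{network}, and decompose
\begin{align*}
\E\!\left[\sum_{t=1}^T w_t\cdot d(C_t,A_t)^{\downarrow}\right]
= \sum_{t,m} c_{t,m}\,\E\bigl[\rect(m,|C_t|)\cdot d(C_t,A_t)^{\downarrow}\bigr].
\end{align*}
Applying Lemma~\ref{budgetlemma} piecewise, the ``absorbing'' term $m\cdot h_{t,m}$ summed over $m$ telescopes back to $\OPT$'s ordered service cost at time $t$ (since each $h_{t,m}$ is the true $m$-th largest $\OPT$ distance), while the $\dhav^{(t)}$ term summed over $m$ telescopes to the reduced-cost service LP at time $t$, itself at most $\OPT$. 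Combined, the expected total service cost is at most $2(24+10\sqrt{3})\,\OPT = (48+20\sqrt{3})\,\OPT$.

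\emph{Stage 3 (movement cost and combination).} For $T=2$, Lemma~\ref{translemma} gives $\gamma\,\E[d(A_1,A_2)]$ equal to the LP movement cost, which is bounded by $\OPT$ via LP relaxation, adding no extra factor and completing the $(48+20\sqrt{3})$-bound. For $T\geq 3$, the extra term $6\gamma\sum_{t,j}\dav^{(t)}(j)$ must be charged back to the service cost: using $w_t\geq\epsilon$ coordinate-wise, the rearrangement inequality yields $\sum_j \dav^{(t)}(j)\leq \tfrac{1}{\epsilon}\,w_t\cdot(\dav^{(t)})^{\downarrow}$, and the latter can be bounded by $\OPT_{\mathrm{serve}}^{(t)}/\epsilon$ via the LP relaxation, injecting the advertised $6\gamma/\epsilon$ term.

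The main obstacle will be the Stage~2 telescoping: one has to verify that the rectangular decomposition of $w_t$ interacts cleanly with both Lemma~\ref{budgetlemma} and the reduced-cost LP, so that the $m\cdot h_{t,m}$ and $\dhav^{(t)}$ contributions aggregate to exactly $2(24+10\sqrt{3})\,\OPT$ rather than something larger, despite the use of flow-based rounding in place of Byrka \etal's clustering. A secondary subtlety in Stage~3 is that $\dav^{(t)}$ is defined with the true metric $d$, not the reduced cost $d^t$, so the final inequality $w_t\cdot(\dav^{(t)})^{\downarrow}\leq \OPT_{\mathrm{serve}}^{(t)}$ requires invoking the LP bound with some care about the threshold structure of $d^t$.
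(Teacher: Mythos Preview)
Your plan is essentially the paper's argument, but two of the steps as you describe them do not close.

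\textbf{The $T=2$ combination.} In Stage~2 you bound the second telescoped term (the $\dhav$ sum) by $\OPT$, getting service $\le 2\mu\,\OPT$ with $\mu=24+10\sqrt3$; then in Stage~3 you add movement $\le\OPT$ and assert it ``adds no extra factor.'' As written this only gives $(2\mu+1)\OPT$. The paper does \emph{not} cash in the second term against $\OPT$ in Stage~2: it keeps it as $\mu\cdot\sum_{t,j,i}x_{ij}^{(t)}d^t(i,j)$, i.e.\ $\mu$ times the LP service part. Then, since $\mu>1$, one has
\[
\mu\cdot(\text{LP service})+\gamma\cdot(\text{LP movement})\;\le\;\mu\cdot\LP\;\le\;\mu\cdot\OPT,
\]
and only this combined bound is added to the first-term contribution $\mu\cdot\OPT_{\mathsf{service}}\le\mu\cdot\OPT$, yielding $2\mu\,\OPT$. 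The movement is absorbed into the $\mu\cdot\LP$ slack, not bounded separately.

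\textbf{The $T\ge3$ charge on $\sum_j\dav^{(t)}(j)$.} The inequality you propose, $w_t\cdot(\dav^{(t)})^\downarrow\le\OPT_{\mathrm{serve}}^{(t)}$, is in general false: the LP optimizes the reduced costs $d^t$, not the true weighted cost, so the fractional connection costs $\dav^{(t)}(j)$ at a single time step bear no direct relation to the integral optimum at that step (the LP may well trade higher service at time $t$ for cheaper movement). The ``secondary subtlety'' you flag is exactly where the paper takes a different, simpler path. Because every entry of $w_t$ is at least $\epsilon$, the reduced cost satisfies $d^t(i,j)=w^t(i,j)\,d(i,j)\ge\epsilon\, d(i,j)$ pointwise, hence
\[
6\gamma\sum_{t,j}\dav^{(t)}(j)
=6\gamma\sum_{t,j,i}x_{ij}^{(t)}d(i,j)
\le\frac{6\gamma}{\epsilon}\sum_{t,j,i}x_{ij}^{(t)}d^t(i,j)
\le\frac{6\gamma}{\epsilon}\,\LP
\le\frac{6\gamma}{\epsilon}\,\OPT,
\]
with no appeal to $\OPT_{\mathrm{serve}}^{(t)}$ or to rearrangement. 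Combined with the Stage~2 terms (again kept as $\mu$ times the LP service part) and the LP movement, this yields $\mu\cdot\OPT+(\mu+6\gamma/\epsilon)\,\LP\le(2\mu+6\gamma/\epsilon)\,\OPT$.
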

\begin{remark}
We remark that when $T\geq3$, our algorithm only works if the smallest entry in the weight vectors is at least $\epsilon>0$.
This means that our algorithm cannot handle 0-1 vectors.	
This technical difficulty arises because for 0-1 vectors, the contribution of a filtered client in $C_t'$ may be zero (in the inner products) and it is unclear how to bound the extra rerouting cost in Lemma \ref{translemma} in terms of the actual contributions of clients.
We leave the more general case as an interesting open question.
\end{remark}

One obvious limitation of the algorithm above is the prohibitive running time when $N_t=\omega(1)$. 
Here, we use the following logarithmic bucketing trick by Aouad and Segev~\cite{aouad2019ordered} and Byrka~\etal\cite{byrka2018constant}: 
To obtain a constant-factor approximation in polynomial time, one does not have to strictly adhere to the original weight vectors, and the guessed thresholds only need to be approximate. 
Since $T$ is a constant, it is possible to guess the largest service distance for each time step by losing a polynomial factor in the running time. 
The connection distances are grouped into logarithmically many buckets thus losing only a factor of $1+\delta$. 
For each bucket, its average weight is also guessed up to a small multiplicative error of $\delta$. 
There are at most $O\left(\log_{1+\delta}\left(\frac{n}{\delta}\right)\right)=O\left(\frac{1}{\delta}\log\left(\frac{n}{\delta}\right)\right)$ buckets for each time step, where $n=|F_t|+|C_t|$, therefore guessing a non-increasing sequence of the average weights only causes another polynomial factor $\exp\left(O\left(\frac{1}{\delta}\log\left(\frac{n}{\delta}\right)\right)\right)=(n/\delta)^{O\left(1/\delta\right)}$ in the running time. 
Finally, because $T$ is a constant, the overall number of guesses is still bounded by a polynomial $(n/\delta)^{O\left(T/\delta\right)}$. 
Formally, we have the following main theorem adapted from Theorem~6.1 in~\cite{byrka2018constant}, and provide a simplified proof in Section \ref{section:lemma:proof}.

\begin{theorem}\label{thm:dom}
When $T=2$, for any $\delta>0$ there exists an $82.66(1+\delta)$-approximation algorithm for 2-\dom, with running time $\left(|F_1|+|C_1|\right)^{O(1/\delta)}\cdot \left(|F_2|+|C_2|\right)^{O(1/\delta)}$. When $T\geq3$ is a constant, and the smallest entry in $\{w_t\}_{t=1}^T$ is at least some constant $\epsilon>0$, for any $\delta>0$ there exists an $\left(82.66+6\gamma/\epsilon\right)(1+\delta)$-approximation algorithm for $T$-\dom, with running time $\prod_{t=1}^T\left(|F_t|+|C_t|\right)^{O(1/\delta)}$.
\end{theorem}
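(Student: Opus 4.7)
The plan is to reduce the general-weight case to the nearly-rectangular setting covered by Lemma~\ref{corelem} via the distance-bucketing technique of Aouad and Segev~\cite{aouad2019ordered}. I will enumerate a polynomially bounded family of discretized instances, invoke the flow-based procedure of Lemma~\ref{corelem} on the surrogate LP induced by each, and output the cheapest solution obtained. The discretization loses only a multiplicative $(1+\delta)$ factor in the ordered service cost, while the facility-movement term $\gamma\sum_t m(A_t,A_{t+1})$ is untouched, so the final approximation ratio composes as $(48+20\sqrt{3})(1+\delta)$ for $T=2$ and $(48+20\sqrt{3}+6\gamma/\epsilon)(1+\delta)$ for constant $T\geq 3$.

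Concretely, for each time step $t$ I first guess the largest service distance $d_{\max}^{(t)}$ appearing in the optimal solution; this is one of at most $|F_t|\cdot|C_t|$ pairwise distance values. Having fixed $d_{\max}^{(t)}$, service costs below $\delta\cdot d_{\max}^{(t)}/n$ are truncated to zero at a $(1+\delta)$ loss, and the remaining range is partitioned into $B_t=O(\delta^{-1}\log(n/\delta))$ geometric buckets of ratio $(1+\delta)$. Within each bucket I replace every client's weight by the bucket's average weight rounded to a power of $(1+\delta)$. Enumerating non-increasing sequences of $B_t$ such rounded weights requires $\exp(O(\delta^{-1}\log(n/\delta)))=n^{O(1/\delta)}$ guesses per time step, so across the $T$ (constant) time steps the total number of enumerated instances is $\prod_t(|F_t|+|C_t|)^{O(1/\delta)}$, matching the advertised running time. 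For each such bucketed instance, the weight vector has $N_t=O(\delta^{-1}\log(n/\delta))$ distinct entries, so the reduced cost functions $\mathfrak{D}$ can be guessed exactly as in Byrka \etal~\cite{byrka2018constant} and Lemma~\ref{corelem} applies directly to the induced surrogate LP.

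The main obstacle I anticipate is proving that the bucketed ordered objective uniformly approximates the true one to within a factor $(1+\delta)$ over every feasible $\{A_t\}$, because the ordered inner product interleaves the weight vector with a reordering of the service-cost vector that depends on the chosen facilities. The standard argument, specialized from~\cite{aouad2019ordered}, is to observe that once the correct $d_{\max}^{(t)}$ and bucket weights are guessed, (i)~truncating service costs below $\delta\cdot d_{\max}^{(t)}/n$ loses at most $\delta$ times the contribution of the top bucket per time step, which is bounded by $\delta\cdot\opt$, and (ii)~within each bucket the sorted service costs are multiplied by a common weight that is within $(1+\delta)$ of every true weight falling in that bucket. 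Combining these with the expected guarantee of Lemma~\ref{corelem} for the correct guess, and noting that incorrect guesses can only inflate the output cost (they never create a solution cheaper than the true optimum), yields the claimed approximation ratio and completes the proof.
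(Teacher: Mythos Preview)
Your proposal is correct and follows essentially the same route as the paper: reduce to Lemma~\ref{corelem} via the Aouad--Segev/Byrka et al.\ bucketing, enumerate $n^{O(1/\delta)}$ discretized instances per time step, and take the best output. The paper's own proof is terser---it simply cites Theorem~5.2 of~\cite{byrka2018constant} and records one tweak, namely that the rounded weights $w_t^*$ are taken as the \emph{next power of $(1+\delta)$ above} $w_{tr}$ (capped at $w_{t1}$ and floored at $\epsilon w_{t1}/|C_t|$), so that $w_t^*\ge w_t$ entrywise while the aggregate objective grows by at most $(1+\delta)$; you should make this upward-rounding direction explicit, since it is what guarantees the surrogate LP value still lower-bounds the optimum and that the $\epsilon$ lower bound on weights survives for the $T\ge 3$ case.
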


\subsection{Missing proofs}
\label{section:lemma:proof}

\begin{proof}[Proof of Lemma \ref{budgetlemma}]
We fix and leave out the superscript $t\in[T]$ in the following. The proof follows the one of Lemma~4.7 by Byrka~\etal in~\cite{byrka2018constant}, by splitting the service cost of $j\in C_t$ into a deterministic part $D_j$ and another stochastic part $X_j$, s.t. $\Pr[d(j,A_t)\leq D_j+X_j]=1$. One major difficulty we need to surmount is the fact that Algorithm \ref{algo:reroute} displaces some facilities, and $A_t$ no longer follows the original marginal distributions described by Charikar and Li~\cite{charikar2012dependent}. We emphasize that our modified matching phase in Algorithm \ref{algo:med:filter} and the structure of the network enable our analysis.

Like the definition of average service cost $\dav(j)=\sum_{i\in F_t}d(i,j)x_{ij}$, we define the average \emph{reduced} service cost $\dhav(j)=\sum_{i\in F_t}d_{-h}(i,j)x_{ij}$, where $d_{-h}$ is defined as in the lemma. In the following, for any fixed client $j$, we progressively charge parts of $d(j,A_t)$ to either $D_j$ or $X_j$ s.t. $d(j,A_t)\leq D_j+X_j$ always holds, where we initially charge 0 to $D_j$ and $X_j$. By Lemma \ref{lemma:med:filter}, there must exist $j'\in C_t'$ (probably $j=j'$ when $j\in C_t'$) such that $d(j,j')\leq4\dav(j)$ and $\dav(j')\leq\dav(j)$. Notice by definition of $d_{-h}$, it is obvious that $\dav(j)\leq\dhav(j)+h$. Using triangle inequality, one has $d(j,A_t)\leq d(j',A_t)+d(j,j')\leq d(j',A_t)+4\dav(j)\leq d(j',A_t)+4\dhav(j)+4h$. We charge $4h$ to $D_j$ and $4\dhav(j)$ (a fixed value) with probability 1 to $X_j$, so $D_j=4h$ and $\E[X_j]=4\dhav(j)$ at the moment.

Next, we charge the stochastic service cost $d(j',A_t)$ to $D_j$ or $X_j$, and consider the case where $j'$ and its nearest neighbor $j''\in C_t'$ are \emph{not} matched in Algorithm \ref{algo:med:filter} (the case where they are matched is simpler but somewhat different, which will be explained later). 
Fix a constant $\beta>5$ which is determined afterwards. 
We have the following cases. 

\begin{description}
\item[Case 1: $A_t\cap\oball(j',\beta h)\neq\emptyset$.] We have $d(j',A_t)\leq\beta h$ and charge $\beta h$ to $D_j$. 

\item[Case 2: $A_t\cap\left(\calU_{j'}\backslash\oball(j',\beta h)\right)\neq\emptyset$.]
We charge this (stochastic) cost to $X_j$. According to Algorithm \ref{algo:reroute}, if $j'$ is in a singleton pair, the marginal distribution over bundle $\calU_{j'}$ is not changed, hence the expectation of this stochastic cost is at most
\begin{align}
\sum_{i\in\calU_{j'}\backslash\oball(j',\beta h)}x_{ij'}d(i,j')&=\sum_{i\in\calU_{j'}\backslash \oball(j',\beta h)}x_{ij'}d_{-h}(i,j')\notag\\
&\leq \sum_{i\in E_{j'}\backslash \oball(j',\beta h)}x_{ij'}d_{-h}(i,j')\triangleq \dhfar(j')\leq\dhav(j'),\label{eq:farcost}
\end{align}
where the last inequality is due to the definition of $\dhav(j')$. 

When $j'$ is not in a singleton pair, since $j'$ is not matched with its nearest neighbor $j''$, it is matched with $\tilde j$ in $p=(j',\tilde j)$, and $j'$ is the nearest neighbor of $\tilde j$ in $C_t'$. The marginal distribution over $\calU_{j'}$ is modified in this case, because whenever the rounded $\intflow$ disagrees on the bundles $\calU_{j'}$ versus $\calU_{\tilde j}$ in layers $L_{6t-5}$ and $L_{6t}$ (note that this only happens when $\intflow(\lnode(p),\rnode(p))=1$), Algorithm \ref{algo:reroute} always chooses the activated facility in $\calU_{j'}$. Comparing to the singleton case in~\eqref{eq:farcost}, there is some extra cost corresponding to the event that $\intflow$ right-activates some facility in $\calU_{j'}\setminus\oball(j',\beta h)$ in $L_{6t}$ and left-activates some facility in $\calU_{\tilde j}$ in $L_{6t-5}$, and the expectation is obviously no more than the amount in~\eqref{eq:farcost}.
In summary, the increase of $\E[X_j]$ in this case is at most $2\dhfar(j')$.

\item[Case 3: $A_t\cap\calU_{j'}=\emptyset$.]
Because $j'$ is not matched with its nearest neighbor $j''$, $j''$ has to be matched with another $j'''$ in another pair $p'=(j'',j''')$ s.t. $d(j'',j''')\leq d(j',j'')$. Denote $R=R_{j'}=\frac{1}{2}d(j',j'')$, since our algorithm guarantees that there exists an open facility in $\calU_{j''}\cup\calU_{j'''}$ (since we have initial flow $\fracflow(\lnode(p'),\rnode(p'))\geq1$), $j'$ can always be served by such a facility at a distance
\[d(j',j'')+d(j'',j''')+\max\{R_{j''},R_{j'''}\}\leq 4R+\max\{R_{j''},R_{j'''}\}\leq 5R,\]
where the last inequality holds when $j'''$ is the nearest neighbor of $j''$, in which case $R_{j'''}\leq R_{j''}= \frac{1}{2}d(j'',j''')\leq R$, or vice versa and $R_{j''}\leq R_{j'''}=\frac{1}{2}d(j'',j''')\leq R$. 
If $R\leq\beta h$, we simply charge $5R\leq 5\beta h$ to $D_j$, hence we assume $R>\beta h$ in the following. 

When $j'$ is in a singleton pair, we have $E_{j'}\cap \oball(j',\beta h)\subseteq \calU_{j'}$ from $R>\beta h$, and 
\begin{equation}
    \dhfar(j')=\sum_{i\in E_{j'}\backslash \oball(j',\beta h)}x_{ij'}d_{-h}(i,j')\geq\sum_{i\in E_{j'}\backslash \calU_{j'}}x_{ij'}d_{-h}(i,j')\geq R\sum_{i\in E_{j'}\backslash \calU_{j'}}x_{ij'}=R(1-y(\calU_{j'}));\label{eq:farthercost}
\end{equation}
according to our rounding algorithm, there is a probability exactly $1-y(\calU_{j'})$ that none of the facilities in $\calU_{j'}$ is chosen, in which case we charge $5R$ to $X_j$, resulting in an increase of expectation of at most $5R(1-y(\calU_{j'}))\leq 5\dhfar(j')$ to $\E[X_j]$. 
When $j'$ is in a normal pair with $\tilde j$, $j'$ must be the nearest neighbor of $\tilde j$, and the marginal probability of not selecting any facility in $\calU_{j'}$ is even smaller, hence the argument above still holds.
\end{description}

Up till now, we have charged costs to $D_j$ and $X_j$ in three different cases, such that $D_j\leq(4+5\beta)h$ (the deterministic part from $d(j',A_t)$ is always at most $5\beta h$) and $\E[X_j]\leq 4\dhav(j)+7\dhfar(j')$ (by taking the sum over all cases), only for the case when $j'$ is not matched with its nearest neighbor $j''$. 

Next, we briefly discuss the case when $j'$ is actually matched with $j''$ in $p=(j',j'')$. 
First, if $j'$ is not the nearest neighbor of $j''$, Case~1 has the same analysis; 
for Case~2, since the probability that none of the facilities in $\calU_{j'}$ is chosen is in fact \emph{increased}, the analysis is similar to \eqref{eq:farcost} with an increase of $\dhfar(j')$ to $\E[X_j]$ instead of $2\dhfar(j')$;
for Case~3, we must have $\intflow(\lnode(p),\rnode(p))=1$, and the event $A_t\cap\calU_{j'}=\emptyset$ happens with probability at most $3(1-y(\calU_{j'}))$ (considering the event that $\intflow$ naturally activates only $\calU_{j''}$ in both $L_{6t-5}$ and $L_{6t}$ with prob. at most $\Pr[\intflow(\lnode(\calU_{j'}),\lnode(p))=0]=1-y(\calU_{j'})$, and the event that the disagreement happens with prob. at most $2(1-y(\calU_{j'}))$ using the same argument in the proof of Lemma \ref{translemma}). 
Since at least one facility is opened in the bundle pair $(\calU_{j'},\calU_{j''})$ at most $d(j',j'')+R_{j''}\leq 3R$ away from $j'$, we (i) either charge $3\beta h$ to $D_j$ if $R\leq\beta h$, (ii) or charge $3R$ more to $X_j$ with probability at most $3(1-y(\calU_{j'}))$ if $R>\beta h$, hence an increase of $9\dhfar(j')$ in expectation by~\eqref{eq:farthercost}. 
In total, $D_j\leq(4+3\beta)h$ and $X_j$ has expectation at most $4\dhav(j)+10\dhfar(j')$ in this case. 
On the other hand, when $j'$ and $j''$ are matched and they are nearest neighbors of each other, the analysis and upper bounds are the same. 

With all cases taken into consideration, we take the maximum on $D_j$ and $
\E[X_j]$ and obtain
\begin{equation}
D_j=(4+5\beta)h,\,\E[X_j]\leq4\dhav(j)+10\dhfar(j').\label{eq:totalcost}
\end{equation}

We first assume $d(j,j')>\alpha h$ for another parameter $\alpha\in(4,\beta-1]$ to be determined later (recall that $\beta>5$). Then from $\alpha h<d(j,j')\leq 4\dav(j)$ and $\dhfar(j')\leq\dhav(j')\leq\dav(j')\leq \dav(j)\leq\dhav(j)+h$, it is easy to see that 
\[h<\frac{4}{\alpha-4}\cdot\dhav(j)\Rightarrow \dhfar(j')\leq \frac{\alpha}{\alpha-4}\cdot\dhav(j),\]
so in this case, $\E[X_j]$ is bounded by 
\begin{equation}
    \E[X_j]\leq\left(4+\frac{10\alpha}{\alpha-4}\right)\dhav(j)=\frac{14\alpha-16}{\alpha-4}\cdot\dhav(j).\label{eq:firstcase:cost}
\end{equation}

Second, if $d(j,j')\leq\alpha h$, We claim that in the fractional assignment $x$, $j$ is served to an extent by facilities in $F_t\backslash \oball(j',\beta h)$ no less than $j'$ does, otherwise we could shift a positive amount of assignment from $F_t\backslash \oball(j',\beta h)$ into $\oball(j',\beta h)$ for $j'$, resulting in a better assignment of $j'$, which is a contradiction. Therefore, we know for sure that
\[\sum_{i\in F_t\backslash \oball(j',\beta h)}x_{ij'}\leq\sum_{i\in F_t\backslash \oball(j',\beta h)}x_{ij}.\]

We then try to modify the assignment of $j'$ and get a sub-optimal solution, which helps us relate $\dhfar(j')$ to $\dhav(j)$. In the altered assignment $x'$, only assignments of $j'$ are changed, where $x'_{ij'}=x_{ij}$ for every $i\in F_t\backslash \oball(j',\beta h)$. It is easy to see that this is possible, simply by shifting some fractional assignment of $j'$ in $\oball(j',\beta h)$ to the outside. Using triangle inequality, for any $i\in F_t\backslash \oball(j',\beta h)$, $d(i,j)\geq d(i,j')-d(j,j')\geq(\beta-\alpha)h\geq h$, so $d_{-h}(i,j)=d(i,j)$ and
\[\frac{d_{-h}(i,j')}{d_{-h}(i,j)}=\frac{d(i,j')}{d(i,j)}\leq\frac{d(i,j')}{d(i,j')-\alpha h}\leq\frac{\beta h}{\beta h-\alpha h}=\frac{\beta}{\beta-\alpha}.\]

Because $x'$ may not be the optimal assignment for $j'$, one has
\begin{equation}
    \dhfar(j')\leq\sum_{i\in F_t\backslash \oball(j',\beta h)}x'_{ij'}d_{-h}(i,j')\leq\frac{\beta}{\beta-\alpha}\sum_{i\in F_t\backslash \oball(j',\beta h)}x_{ij}d_{-h}(i,j)\leq\frac{\beta}{\beta-\alpha}\cdot\dhav(j).\label{eq:secondcase:cost}
\end{equation}

Combining~\eqref{eq:totalcost}\eqref{eq:firstcase:cost}\eqref{eq:secondcase:cost}, we have
\[D_j\leq (4+5\beta)h,\,\E[X_j]\leq\max\left\{\frac{14\alpha-16}{\alpha-4},\frac{14\beta-4\alpha}{\beta-\alpha}\right\}\dhav(j),\,\textrm{where }\beta>5,4<\alpha\leq\beta-1.\]

Plugging in $\alpha=2+2\sqrt{3},\,\beta=4+2\sqrt{3}$, one has
\[D_j\leq (24+10\sqrt{3})h,\,\E[X_j]\leq(24+10\sqrt{3})\dhav(j),\,\Pr[d(j,A_t)\leq D_j+X_j]=1.\]

Notice that this holds for all clients $j\in C_t$. In this case, for the expectation of $\topl(d(C_t,A_t))$, since we are only paying for at most $\ell$ of the deterministic values (i.e., $D_j$'s), it follows that,
\[\E\left[\topl(d(C_t,A_t))\right]\leq \ell\cdot\max_jD_j+\sum_{j\in C_t}\E[X_j]\leq 41.33\ell\cdot h+41.33\sum_{j\in C_t}\dhav(j).\]
\end{proof}

\begin{proof}[Proof of Lemma \ref{corelem}]
There are $|F_t|\cdot|C_t|$ possible distinct distances at time $t$, hence the number of guesses is obvious, and each guess is associated with a family of reduced cost functions $\mathfrak{D}$, which fully dictates \ref{lp-median} and the running (the non-stochastic part) of our algorithm.

Assume our guessed thresholds are exactly those in the optimal solution in the following. Let $\LP$ be the optimum of our corresponding \ref{lp-median}, $\{A_t\}_{t=1}^T$ be the stochastic output solution, and $\{O_t\}_{t=1}^T$ be the optimal solution with a total cost $\opt=\opt_\mathsf{service}+\opt_\mathsf{move}$, where $\opt_\mathsf{service}$ is the total client service cost and $\opt_\mathsf{move}$ is the total facility movement cost ($\gamma$-scaled). We can easily see $\opt\geq\LP$ by considering the LP solution induced by the optimum. Also define $\sol=\sol_\mathsf{service}+\sol_\mathsf{move}$, where the two parts represent the total client service cost and total facility movement cost ($\gamma$-scaled) in our solution.
	
For each $t\in[T],r\in[N_t]$, let $I_r^\pt$ be the largest index in $w_t$ that has value $w_{tI_r^\pt}=\bar w_{tr}$. Using Lemma \ref{budgetlemma} with $h=T_r^\pt$ and $\ell=I_r^\pt$, we have
\begin{equation}
\E\left[\mathrm{Top}_{I_r^\pt}d(C_t,A_t))\right]\leq41.33I_r^\pt T_r^\pt+41.33\sum_{j\in C_t}\sum_{i\in F_t}d_{-T_r^\pt}(i,j)x_{ij}^\pt,\label{eq:rect:applied}
\end{equation}
and we decompose the ordered cost $w_t^\top d(C_t,A_t)^\downarrow$ as a conic combination of $\topl$ norms (see, e.g.,~\cite{byrka2018constant}), 
\begin{align}
\E[w_t^\top d(C_t,A_t)^\downarrow]&=\sum_{r=1}^{N_t}(\bar w_{tr}-\bar w_{t(r+1)})\cdot\E\left[\mathrm{Top}_{I_r^\pt}d(C_t,A_t)\right]\notag\\
&\leq\sum_{r=1}^{N_t}(\bar w_{tr}-\bar w_{t(r+1)})\left(41.33 I_r^\pt T_r^\pt+41.33\sum_{j\in C_t}\sum_{i\in F_t}d_{-T_r^\pt}(i,j)x_{ij}^\pt\right)\notag\\
&=41.33\sum_{r=1}^{N_t}(\bar w_{tr}-\bar w_{t(r+1)})I_r^\pt T_r^\pt+41.33\sum_{r=1}^{N_t}(\bar w_{tr}-\bar w_{t(r+1)})\sum_{j\in C_t}\sum_{i\in F_t}d_{-T_r^\pt}(i,j)x_{ij}^\pt.\label{eq:conic:expression}
\end{align}
	
Since our guessed thresholds are correct, in the inner product $\opt_t=w_t^\top d(C_t,O_t)^\downarrow$ of the optimum, every weight that is equal to $\bar w_{tr}$ should be multiplied by a distance at least $T_r^\pt$, thus one has
\begin{align}
\opt_t&\geq \sum_{r=1}^{N_t}\bar w_{tr}(I_r^\pt-I_{r-1}^\pt)T_r^\pt\geq \sum_{r=1}^{N_t}\bar w_{tr}I_r^\pt T_r^\pt-\sum_{r=1}^{N_t}\bar w_{tr}I_{r-1}^\pt T_{r-1}^\pt\notag\\
&\geq \sum_{r=1}^{N_t}\left(\bar w_{tr}I_r^\pt T_r^\pt-\bar w_{t(r+1)}I_{r}^\pt T_{r}^\pt\right)=\sum_{r=1}^{N_t}(\bar w_{tr}-\bar w_{t(r+1)})I_{r}^\pt T_{r}^\pt,\label{eq:conic:firstpart}
\end{align}
where $I_0^\pt=0$, $\opt_t$ denotes the service cost of all clients in $C_t$, and thus $\sum_{t\in[T]}\opt_t=\opt_\mathsf{service}$.
	
Further, for the second term in~\eqref{eq:conic:expression}, we have
\begin{align}
\sum_{r=1}^{N_t}(\bar w_{tr}-\bar w_{t(r+1)})\sum_{j\in C_t}\sum_{i\in F_t}d_{-T_r^\pt}(i,j)x_{ij}^\pt
&=\sum_{j\in C_t}\sum_{i\in F_t}x_{ij}^\pt\cdot\sum_{r=1}^{N_t}(\bar w_{tr}-\bar w_{t(r+1)}) d_{-T_r^\pt}(i,j)\notag\\
&=\sum_{j\in C_t}\sum_{i\in F_t}x_{ij}^\pt\cdot d(i,j)\sum_{r:T_r^\pt\leq d(i,j)}^{N_t}(\bar w_{tr}-\bar w_{t(r+1)})\notag\\
&=\sum_{j\in C_t}\sum_{i\in F_t}x_{ij}^\pt\cdot d^\pt(i,j).\label{eq:conic:secondpart}
\end{align}
To see the last equality, notice that the sum is taken over all $r$ s.t. $T_r^\pt\leq d(i,j)$. Since $T_r^\pt$ is decreasing and $T_{N_t+1}^\pt=0$, this sum is equal to $\bar w_{tr_0}$ s.t. $r_0=\min\{r\in[N_t]:T_r^\pt\leq d(i,j)\}$, and $\bar w_{tr_0}d(i,j)=d^\pt(i,j)$ by definition. 
	
Combining~\eqref{eq:conic:expression}\eqref{eq:conic:firstpart}\eqref{eq:conic:secondpart}, our algorithm outputs a stochastic solution $\{A_t\}_{t\in[T]}$ such that the expected service cost of all clients is at most 
\begin{equation}
\E[\sol_\mathsf{service}]=\E\left[\sum_{t=1}^Tw_t^\top d(C_t,A_t)^\downarrow\right]\leq41.33\sum_{t=1}^T\opt_t+41.33\sum_{t=1}^T\sum_{j\in C_t}\sum_{i\in F_t}x_{ij}^\pt\cdot d^\pt(i,j).\label{eq:service:cost}
\end{equation}
	
Meanwhile by Lemma \ref{translemma}, the movement cost of all facilities in $\{A_t\}_{t\in[T]}$ has an expectation of at most \begin{equation}
\E[\sol_\mathsf{move}]\leq\gamma\sum_{t=1}^{T-1}\sum_{i\in F_t}\sum_{i'\in F_{t+1}}d(i,i')z_{ii'}^\pt+6\gamma\cdot\mathbbm{1}[T\geq3]\sum_{t\in [T]}\sum_{j\in C_t}\dav^\pt(j).\label{eq:movement:cost}
\end{equation}
	
By combining~\eqref{eq:service:cost}\eqref{eq:movement:cost}, if $T=2$, our overall cost is 
\begin{align*}
\E[\sol]&\leq41.33\opt_\mathsf{service}+41.33\sum_{t=1}^T\sum_{j\in C_t}\sum_{i\in F_t}x_{ij}^\pt\cdot d^\pt(i,j)+\gamma\sum_{t=1}^{T-1}\sum_{i\in F_t}\sum_{i'\in F_{t+1}}d(i,i')z_{ii'}^\pt\\
&\leq41.33\opt+41.33\sum_{t=1}^T\sum_{j\in C_t}\sum_{i\in F_t}x_{ij}^\pt\cdot d^\pt(i,j)+41.33\gamma\sum_{t=1}^{T-1}\sum_{i\in F_t}\sum_{i'\in F_{t+1}}d(i,i')z_{ii'}^\pt\\
&\leq41.33\opt+41.33\LP\leq82.66\opt.
\end{align*}

For the case where $T\geq3$, recall that we assume the smallest entry in $\{w_t\}_{t\in[T]}$ is at least $\epsilon>0$.
Using Lemma \ref{translemma}, we have the following,
\begin{align*}
\E[\sol]&\leq 41.33\opt_\mathsf{service}+41.33\sum_{t=1}^T\sum_{j\in C_t}\sum_{i\in F_t}x_{ij}^\pt\cdot d^\pt(i,j)+\gamma\sum_{t=1}^{T-1}\sum_{i\in F_t}\sum_{i'\in F_{t+1}}d(i,i')z_{ii'}^\pt\\
&\quad+6\gamma\sum_{t\in [T]}\sum_{j\in C_t}\sum_{i\in F_t}x_{ij}^\pt\cdot d(i,j)\\
&\leq 41.33\opt+41.33\sum_{t=1}^T\sum_{j\in C_t}\sum_{i\in F_t}x_{ij}^\pt\cdot d^\pt(i,j)+\gamma\sum_{t=1}^{T-1}\sum_{i\in F_t}\sum_{i'\in F_{t+1}}d(i,i')z_{ii'}^\pt\\
&\quad+\frac{6\gamma}{\epsilon}\sum_{t\in [T]}\sum_{j\in C_t}\sum_{i\in F_t}x_{ij}^\pt\cdot d^\pt(i,j)\\
&\leq41.33\opt+\left(41.33+\frac{6\gamma}{\epsilon}\right)\LP
\leq\left(82.66+\frac{6\gamma}{\epsilon}\right)\opt,
	\end{align*}	
	where the second inequality is because the smallest entry in $\{w_t\}_{t\in[T]}$ is at least $\epsilon$, thus our reduced cost function $d^\pt$ always satisfies $d^\pt\geq\epsilon d$ for any $t$, since $d^\pt(i,j)=d(i,j)\bar w_{ts}\geq\epsilon d(i,j)$ by definition.
\end{proof}

\begin{proof}[Proof of Theorem \ref{thm:dom}]
To start with, we fix a small constant $\delta>0$ and consider a slightly modified instance, where instead of using the weight vector $w_t$, we define $\tw_t$ where $\tw_{ts}=\max\{w_{ts},\delta w_{t1}/|C_t|\}$. 
It is easy to see that for any \emph{non-negative non-increasing} vector $\vcv\in\mathbb{R}_{\geq0}^{|C_t|}$, one has $\tw_t^\top\vcv\geq w_t^\top\vcv$, and $\tw_t^\top\vcv-w_t^\top\vcv\leq \delta w_{t1}\vcv_1\leq\delta w_t^\top\vcv$, which can be rewritten as $\tw_t^\top\vcv\in[w_t^\top\vcv,(1+\delta)w_t^\top\vcv]$. 
Therefore, by solving $T$-\dom on these new weight vectors, we lose a factor of at most $1+\delta$ in the approximation ratio. Suppose we do this in the sequel.

Since $T=O(1)$, we also assume that the maximum connection distance $T_{\max}^\pt\geq0$ is known to us for any $t\in[T]$ in a fixed optimal solution.
This is done via simple exhaustive search, and we only lose a polynomial factor in the running time. 
Fix $t$ and let $M_t=\ceil{\log_{1+\delta}(|C_t|/\delta)}$.
For $r=1,2,\dots,M_t$, define the real intervals 
\[J_r=\left((1+\delta)^{-(M_t-r+1)}T_{\max}^\pt,(1+\delta)^{-(M_t-r)}T_{\max}^\pt\right],\] 
and let $J_0=[0,(1+\delta)^{-M_t}T_{\max}^\pt]$. 
We have that $\calJ=\{J_0,\dots,J_{M_t}\}$ is a partition of the interval $[0,T_{\max}^\pt]$, i.e., the interval that all connection distances at $t$ fall in, with respect to the optimum.

Denote $\vco$ the \emph{sorted} service cost vector at time $t$ in the optimum, and $I_{J}=\{s\in[|C_t|]:\vco_s\in J\}$ the indices of entries of $\vco$ in $J$. 
Obviously, $I_{J_{M_t}}$ is non-empty, because our guess $T_{\max}^\pt$ is correct and $T_{\max}^\pt\in J_{M_t}$. 
For $r=M_t,M_t-1,\dots,1,0$, iteratively define the average weight in the following way (we omit the superscript $(t)$ here for simplicity),
\begin{equation}
    \bw_{r}=\left\{
    \begin{array}{cc}
    \frac{1}{|I_{J_r}|}\sum_{r\in I_{J_r}}\tw_{tr} & I_{J_r}\neq\emptyset\\
    \bw_{r+1} & I_{J_r}=\emptyset,
    \end{array}
    \right.
\end{equation}
and this is well-defined since $I_{J_{M_t}}$ is always non-empty. 
Additionally, note that $\bw=\{\bw_0,\dots,\bw_{M_t}\}$ is non-decreasing and they are all in the interval $[\delta\tw_{t1}/|C_t|,\tw_{t1}]$. 
Although we have no knowledge of the optimum or the exact values of these average weights, it is possible for us to approximately guess their values. 
To achieve this, recall that $M_t=\ceil{\log_{1+\delta}(|C_t|/\delta)}$ and all entries of $\bw$ are in the interval $[\delta\tw_{t1}/|C_t|,\tw_{t1}]$.
Therefore, if we guess each $\bw_r$ to its \emph{closest and no smaller} power of $1+\delta$ (hence $\hw_r\in[\bw_r,(1+\delta)\bw_r]$), there are at most $2+\log_{1+\delta}(|C_t|/\delta)$ possible candidates, and they must also be non-decreasing as $\bw$ does. Using a basic counting method, the number of such non-decreasing sequences is at most $\exp(O(\log_{1+\delta}(|C_t|/\delta)))=(|C_t|/\delta)^{O(1/\delta)}$, thus bounded by a polynomial.

We define the reduced cost functions now. For each $t\in[T]$, we guess the maximum connection distance $T_{\max}^\pt$ and the approximate values of average weights $\hw^\pt=\{\hw_0^\pt,\dots,\hw_{M_t}^\pt\}$. 
Suppose the guesses are correct in what follows. 
The reduced cost function $d^\pt$ is defined as $d^\pt(i,j)=\hw_r^\pt\cdot d(i,j)$ where $d(i,j)\in J_r$, $r=0,\dots,M_t$. 
We notice that $d^\pt$ is undefined for distances larger than $T_{\max}^\pt$, and this is easily handled by explicitly adding to \ref{lp-median} the constraints $x_{ij}^\pt=0$ for all $i\in F_t,j\in C_t$ s.t. $d(i,j)>T_{\max}^\pt$. 
Fix $t$ in the following, and for each $s\in[|C_t|]$, define $T_s=\sup(J)$ where $\vco_s\in J$ and $J\in\calJ$. 
We obviously have $(1+\delta)\vco_s\geq T_s$ if $s\geq1$, and otherwise we have $\vco_s\in J_0$ and thus $\vco_s\leq T_s\leq\delta T_{\max}^\pt/|C_t|$. 
Using Lemma \ref{budgetlemma} with $h=T_s$ and $\ell=s$, $s=1,2,\dots,|C_t|$,
\begin{align}
    \E[\tw_t^\top d(C_t,A_t)^\downarrow]&=\sum_{s=1}^{|C_t|}(\tw_{ts}-\tw_{t(s+1)})\E[\mathrm{Top}_{s}(d(C_t,A_t))]\notag\\
    &\leq\sum_{s=1}^{|C_t|}(\tw_{ts}-\tw_{t(s+1)})\left(41.33s\cdot T_s+41.33\sum_{j\in C_t}\sum_{i\in F_t}d_{-T_s}(i,j)x_{ij}^\pt\right)\notag\\
    &\leq41.33(1+\delta)\sum_{s:\vco_s\in J_{\geq1}}(\tw_{ts}-\tw_{t(s+1)})s\cdot\vco_s
    +41.33\sum_{s:\vco_s\in J_0}(\tw_{ts}-\tw_{t(s+1)})s\cdot\frac{\delta T_{\max}^\pt}{|C_t|}\notag\\
    &\quad+41.33\sum_{s=1}^{|C_t|}(\tw_{ts}-\tw_{t(s+1)})\sum_{j\in C_t}\sum_{i\in F_t}d_{-T_s}(i,j)x_{ij}^\pt\notag\\
    &\leq41.33(1+\delta)\sum_{s=1}^{|C_t|}(\tw_{ts}-\tw_{t(s+1)})\mathrm{Top}_s(\vco)+41.33\delta\tw_{t1}T_{\max}^\pt\notag\\
    &\quad+41.33\sum_{s=1}^{|C_t|}(\tw_{ts}-\tw_{t(s+1)})\sum_{j\in C_t}\sum_{i\in F_t}d_{-T_s}(i,j)x_{ij}^\pt\notag\\
    &\leq41.33(1+2\delta)\tw_t^\top\vco+41.33\sum_{s=1}^{|C_t|}(\tw_{ts}-\tw_{t(s+1)})\sum_{j\in C_t}\sum_{i\in F_t}d_{-T_s}(i,j)x_{ij}^\pt.\label{eq:true:cost:strong}
\end{align}

To bound~\eqref{eq:true:cost:strong}, we notice that its first part is exactly $41.33(1+2\delta)\opt_t$, where $\opt_t$ is the total ordered service cost paid in the optimum at time $t$. For the second part, we obtain
\begin{align}
\sum_{s=1}^{|C_t|}(\tw_{ts}-\tw_{t(s+1)})\sum_{j\in C_t}\sum_{i\in F_t}d_{-T_s}(i,j)x_{ij}^\pt
&=\sum_{j\in C_t}\sum_{i\in F_t}\sum_{s=1}^{|C_t|}(\tw_{ts}-\tw_{t(s+1)})x_{ij}^\pt d_{-T_s}(i,j)\notag\\
&=\sum_{j\in C_t}\sum_{i\in F_t}x_{ij}^\pt d(i,j)\sum_{s\in[|C_t|],T_s\leq d(i,j)}(\tw_{ts}-\tw_{t(s+1)}).\label{eq:second:cost:strong}
\end{align}

We group all $d(i,j)$ s.t. $d(i,j)\in J_0$ together, and notice that $d(i,j)\leq\delta T_{\max}^\pt/|C_t|$. Since we have $\sum_{i\in F_t}x_{ij}^\pt=1$ for all $j\in C_t$, the total contribution of all such pairs $(i,j)$ in~\eqref{eq:second:cost:strong} is at most $\delta\tw_{t1}T_{\max}^\pt\leq\delta\opt_t$. 
Now suppose $d(i,j)\in J_{r_0},\,r_0\geq1$, and for a technical reason, we assume that none of the input distances coincides with any boundary of the intervals. 
This is achieved by slightly increasing the boundaries of all intervals (also see~\cite{byrka2018constant} about this trick). 
If there is no such $T_s\leq d(i,j)$, the corresponding contribution in~\eqref{eq:second:cost:strong} is zero, while $d^\pt(i,j)=\hw_{r_0}\cdot d(i,j)$ is non-negative; otherwise, the sum of weights is the maximum $\tw_{ts}$ s.t. $\vco_s\in J_{<r_0}$, which is at most the average weight on $J_{r_0}$ and hence at most $\bw_{r_0}\leq\hw_{r_0}$. 
This implies that~\eqref{eq:second:cost:strong} is at most
\begin{equation}
\sum_{s=1}^{|C_t|}(\tw_{ts}-\tw_{t(s+1)})\sum_{j\in C_t}\sum_{i\in F_t}d_{-T_s}(i,j)x_{ij}^\pt\leq\sum_{j\in C_t}\sum_{i\in F_t}x_{ij}^\pt d^\pt(i,j)+\delta\opt_t,\label{eq:third:cost:strong}
\end{equation}
and the first part is exactly the service cost of time step $t$ in the objective of \ref{lp-median}.

Finally, via considering the solution $(x^\star,y^\star,z^\star)$ induced by the optimum of the original problem, the optimum of the relaxation \ref{lp-median} under the aforementioned reduced cost functions is at most,
\begin{align}
    \LP&\leq\sum_{t=1}^T\sum_{j\in C_t}\sum_{i\in F_t}d^\pt(i,j)x_{ij}^{\star\pt} +\gamma\sum_{t=1}^{T-1}\sum_{i\in F_t}\sum_{i'\in F_{t+1}}d(i,i')z_{ii'}^{\star\pt}\notag\\
    &\leq\sum_{t=1}^T\sum_{s=1}^{|C_t|}\hw_{r_s}^\pt\cdot\vco_s^\pt+\opt_{\mathsf{move}}\notag\\
    &\leq(1+\delta)\sum_{t=1}^T\sum_{r=0}^{|M_t|}\bw_{r}^\pt\sum_{s\in I_{J_r}^\pt}\vco_s^\pt+\opt_{\mathsf{move}}\notag\\
    &\leq(1+\delta)\sum_{t=1}^T\sum_{r:|I_{J_r}^\pt|\neq\emptyset}
    \left(\sum_{s\in I_{J_r}^\pt}\tw_{ts}\right)\left(\frac{1}{|I_{J_r}^\pt|}
    \sum_{s\in I_{J_r}^\pt}\vco_s^\pt\right)+\opt_{\mathsf{move}}\notag\\
    &\leq(1+\delta)^2\sum_{t=1}^T\sum_{r:|I_{J_r}^\pt|\neq\emptyset}
    \sum_{s\in I_{J_r}^\pt}\tw_{ts}\vco_s^\pt+(1+\delta)\sum_{t=1}^T\delta\tw_{t1}T_{\max}^\pt+\opt_{\mathsf{move}}\notag\\
    &\leq (1+3\delta+2\delta^2)\sum_{t=1}^T\tw_t^\top\vco^\pt+\opt_{\mathsf{move}}\leq(1+3\delta+2\delta^2)\opt,\label{eq:lp:cost:strong}
\end{align}
where $\hw_{r_s}^\pt$ denotes the guessed average weight where $\vco_s^\pt\in J_{r_s}$. We note that we again consider $J_0$ and $J_{>0}$ separately. For the former, the total contribution is easily bounded using $\sup(J_0)\leq\delta T_{\max}^\pt/|C_t|$; for the latter, since the distances falling into the same interval differ by a multiplicative factor of at most $\delta$, we have the above inequality.

By combining~\eqref{eq:true:cost:strong}\eqref{eq:third:cost:strong}\eqref{eq:lp:cost:strong} and taking the sum over $t\in[T]$, the rest follows from the same analysis in Lemma \ref{corelem}, thus omitted here.
\end{proof}


\section{Approximating \dks}\label{section-outlier}

In this section, we present our various results on \dks. We start by showing two hardness results for $T$-\dks and $T$-\rdks when $T\geq3$, then present a simple 3-approximation for 2-\dks and another more involved multi-criteria approximation algorithm for 2-\rdks in the more nuanced outlier setting. We finish this section with a hardness result on pure approximations for 2-\rdks.

\subsection{The hardness of approximating \emph{3-\dks} and \emph{3-\rdks}}\label{neg:3dks}

As a warm-up, we use a simple argument to show that when the number of time steps is at least 3, $T$-\dks admits no non-trivial approximations unless $\mathrm{P=NP}$. The proof uses the reduction from the \emph{perfect 3D-matching} problem, which is known to be NP-complete~\cite{karp2010reducibility}.

\begin{theorem}\label{thm:dksnphard}
If $T\geq3$, there is no polynomial-time algorithm for \emph{$T$-\dks} with non-trivial approximation factors, unless $\mathrm{P=NP}$.
\end{theorem}
\begin{proof}
Notice that we only need to prove the hardness of 3-\dks, since this is a special case of $T>3$ by setting $C_t=\emptyset,\,t\geq4$. We reduce an arbitrary instance of perfect 3D-matching to 3-\dks. Recall that in an instance of perfect 3D-matching, we are given three finite ground sets $A,B,C$ with $|A|=|B|=|C|$, and a triplet set $\calT\subseteq A\times B\times C$. Suppose $|A|=n$ and $|\calT|=m$, and one needs to decide whether there exists a subset $\calS\subseteq\calT$, such that $|\calS|=n$, and each element in $A\cup B\cup C$ appears exactly once in some triplet in $\calS$. W.l.o.g., we assume that each element in $A\cup B\cup C$ appears in at least one triplet, otherwise the answer is trivially negative.

Assume there is an $\alpha$-approximation for 3-\dks with factor $\alpha>1$. We construct the following graph $G=(V_A\cup V_B\cup V_C,E)$, where the vertex set and edge set are initially empty.
\begin{enumerate}[Step 1.]
	\item For each triplet $g=(a,b,c)\in\calT$ where $a\in A,\,b\in B,\,c\in C$, add three new vertices $V_A\leftarrow V_A\cup\{a_g\},V_B\leftarrow V_B\cup\{b_g\},V_C\leftarrow V_C\cup\{c_g\}$. Connect $(a_g,b_g)$ with an edge of length $\alpha$ and add it to $E$. Connect $(b_g,c_g)$ with an edge of length $\alpha$ and add it to $E$.
	\item For any two vertices $a_g\in V_A,\,a_{g'}\in V_A$ corresponding to the same element $a\in A$, connect them with an edge of length 1 and add the edge to $E$, thus forming a clique $K_a$ with unit-length edges. Repeat the same procedure for $V_B,\,V_C$.
\end{enumerate}
	
An illustration is shown in Fig. \ref{figure:matching}. We approximately solve 3-\dks on $G$ with its graph metric $d_G$ (set $d_G(s,t)=\infty$ if they are in different connected components), with $k=n$ and the movement constraint $B=\alpha$, where the client sets are $C_1=V_A,\,C_2=V_B,\,C_3=V_C$ and facility sets are $F_1=V_A,\,F_2=V_B,\,F_3=V_C$. W.l.o.g., this 3-\dks instance has optimum at least 1, otherwise one has $m=n$ and the original instance is again trivial.
    
Now, if the original perfect 3D-matching instance is feasible, by letting each of the $n$ mobile facilities move along the $n$ trajectories induced by the feasible solution, we have a solution to the 3-\dks instance on $G$ with objective exactly 1, since each clique $K_u$ corresponding to some $u\in A\cup B\cup C$ is formed using unit-length edges, and there is exactly one open facility in every such clique. Conversely, if the 3-\dks instance has optimum 1, it is easy to see that there must be exactly one vertex chosen as an open facility in each clique $K_u,\,u\in A\cup B\cup C$, and since the movement constraint is $B=\alpha$, each open facility must follow the edges $(a_g,b_g)$ and $(b_g,c_g)$ for some $g\in\calT$ when it moves. Thus, this optimum directly induces a feasible solution to the original perfect 3D-matching instance.
    
By noticing that for any $s\neq t\in V_A$, either $d_G(s,t)=1$ or $d_G(s,t)\geq2\alpha$, one has that the optimum of the 3-\dks instance is either equal to 1 or no less than $2\alpha$. Hence any $\alpha$-approximation for 3-\dks implies an efficient algorithm that determines the feasibility of the perfect 3D-matching instance, which is prohibited by its NP-completeness.
\end{proof}

\begin{figure}[hbt!]
    \centering
    \includegraphics{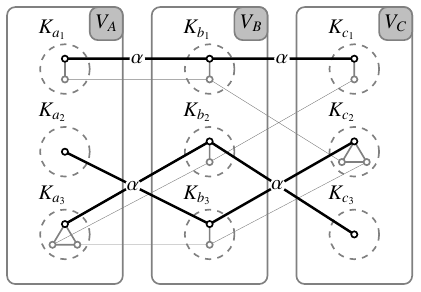}
    \caption{An illustration of the reduction from perfect 3D-matching to 3-\dks, with the instance $\calT=\{(a_1,b_1,c_1),(a_2,b_3,c_2),(a_3,b_2,c_3),(a_1,b_1,c_2),(a_3,b_2,c_1),(a_3,b_3,c_2)\}$, $A=\{a_1,a_2,a_3\}$, $B=\{b_1,b_2,b_3\}$ and $C=\{c_1,c_2,c_3\}$. 
    Every intra-clique edge has length 1, and we highlight a feasible solution with objective value 1.}
    \label{figure:matching}
\end{figure}

From the result above, the hardness of pure approximations for $T$-\rdks with $T\geq3$ directly follows, since $T$-\dks is only its special case. We strengthen this result by showing that, even by allowing multi-criteria approximations, i.e., violating the outlier constraints by some small $\epsilon$-fraction, one is still unable to obtain any non-trivial approximation factors for $T$-\rdks, $T\geq3$. The result is formally given in the following theorem, where we use the reduction from \emph{maximum 3D-matching}, which is known to be APX-complete~\cite{chlebik2006complexity,kann1991maximum}.

\begin{theorem}\label{rdksapxhard}
    If $T\geq3$, there exists a constant $\epsilon_0\in(0,1)$, such that \emph{$T$-\rdks} admits no multi-criteria $(\alpha,1-\epsilon_0,\dots,1-\epsilon_0)$-approximations for any non-trivial factor $\alpha>1$, unless $\mathrm{P=NP}$.
\end{theorem}
\begin{proof}
    Again, we only need to prove the case of 3-\rdks, since this is a special case of larger $T$'s. For a maximum 3D-matching instance on ground sets $A,\,B,\,C$ (not necessarily with identical cardinalities) and a triplet set $\calT\subseteq A\times B\times C$, we are asked to find $\calS\subseteq\calT$ such that the triplets in $\calS$ are pairwise disjoint and $|\calS|$ is maximized.
    
    By way of contradiction, assume we have for \emph{some} $\alpha>1$ and \emph{any} constant $\epsilon>0$, an efficient multi-criteria $(\alpha,1-\epsilon,1-\epsilon,1-\epsilon)$-approximation for 3-\rdks. Given any maximum 3D-matching instance, we first create the same graph $G=(V_A\cup V_B\cup V_C,E)$ as in the proof for Theorem \ref{thm:dksnphard}. Let $m=\max_{u\in A\cup B\cup C}|K_u|$ be the maximum size of the cliques in $G$, i.e., the maximum number of occurrences of elements in the triplets, and add additional dummy vertices to each clique so that each clique has size exactly $m$. Evidently, any intra-clique distance is 1 and any inter-clique distance is at least $2\alpha$ (within the same time step).
    
    Now, we run the approximation algorithm for 3-\rdks on $G$ with $C_1=F_1=V_A$, $C_2=F_2=V_B$ and $C_3=F_3=V_C$ and movement constraint $B=\alpha$ as the input. The algorithm is run for multiple times, for the parameter $k=1,2,\dots,\min\{|A|,|B|,|C|\}$ and outlier constraints $l_1=l_2=l_3=mk$ each time.
    
    Suppose the optimum for the maximum 3D-matching instance is $k_0\in\mathbb{Z}_+$, and it is easy to see that this optimum induces a solution to the 3-\rdks instance on $G$ that utilizes $k_0$ open facilities and covers exactly $mk_0$ clients within distance 1 at each time step, using intra-clique unit-length edges. Because our algorithm is an $(\alpha,1-\epsilon,1-\epsilon,1-\epsilon)$-approximation, when we run it with parameter $k=k_0$, it must be able to output a solution that uses $k_0$ open facilities and covers at least $(1-\epsilon)mk_0$ clients within distance $\alpha$ at each time step, and this solution naturally induces a subset $\calS\subseteq\calT$ with $|\calS|=k_0$, according to the definition of $G$. Recall that inter-clique distances are at least $2\alpha$, hence this solution must use only intra-clique edges. We say that $u\in A\cup B\cup C$ is ``matched'' in this solution, if and only if some vertex in the clique $K_u$ is selected as an open facility location. Thus the numbers of elements in $A,B$ and $C$ that are matched in this 3-\rdks solution are all at least $(1-\epsilon)k_0m/m=(1-\epsilon)k_0$, respectively. 
    
    This means that $\calS$ with $|\calS|=k_0$, which may have intersecting triplets, matches at least $(1-\epsilon)k_0$ elements in $A,B$ and $C$. To further distill the induced subset $\calS\subseteq\calT$ and remove intersecting triplets, we iteratively remove some arbitrary $g\in\calS$ and assign $\calS\leftarrow\calS\setminus\{g\}$, whenever $g$ intersects some other triplet in $\calS$. It is easy to see that, at most $3\epsilon k_0$ triplets are removed in this process, and we obtain $|\calS|\geq(1-3\epsilon)k_0$ as a feasible 3D-matching solution. This happens for any constant $\epsilon>0$, hence a de facto PTAS for maximum 3D-matching. But unless $\mathrm{P=NP}$, this is impossible since maximum 3D-matching is APX-complete. Therefore, our initial assumption is incorrect, which yields the theorem as its contrapositive assertion.
\end{proof}

\begin{remark}
One can easily see that the two hardness results above are also valid if we only allow the \dks and \rdks solutions to consist of subsets of open facilities at each time step, instead of multi-sets.
\end{remark}

\subsection{A 3-approximation for \emph{2-\dks}}
\label{pos:2dks}

In contrast to the hardness of approximating $T$-\dks for $T\geq 3$, we consider 2-\dks on general metrics and present a simple flow-based 3-approximation. Suppose that we have successfully guessed the optimum $R^\star$ (using binary search). We construct the following network flow instance $\calG(\calV,\calE)$.
$\calV$ consists of $4$ layers of vertices
(two layers $\calL^{11},\calL^{12}$ for $t=1$, 
two layers $\calL^{21},\calL^{22}$ for $t=2$), 
a source $\source$ and sink $\sink$.
We define $\calG$ as follows:
\begin{enumerate}[Step 1.]
	\item For each $i\in F_1$, add a vertex in $\calL^{12}$. For $i'\in F_2$, add a vertex in $\calL^{21}$.
	\item Repeatedly pick an arbitrary client $j\in C_1$ and remove from $C_1$ every client within distance $2R^\star$ from $j$. Denote these clients a new cluster corresponding to $j$ and call $j$ a cluster center.
	Since $R^\star$ is optimal, it is easy to see we obtain at most $k$ such clusters (otherwise there exist $k+1$ clients with pair-wise distance $>2R^\star$, and every open facility can obviously cover at most one of them, which is a contradiction).
	And if there are less than $k$ clusters, we create some extra dummy clusters to obtain
	exactly $k$ clusters, while dummy clusters do not represent any clients.
	For each cluster, add a vertex to $\calL^{11}$. Repeat this for $C_2$ and form $\calL^{22}$.
	\item The four layers are arranged in order as $\calL^{11},\calL^{12},\calL^{21},\calL^{22}$. With a slight abuse of notation, for a non-dummy cluster center $u\in\calL^{11}$ and facility location $v\in\calL^{12}$, connect them using a link with unit capacity if $d(u,v)\leq R^\star$; for facility location $w\in\calL^{21}$ and a non-dummy cluster center $z\in\calL^{22}$, connect them using a link with unit capacity if $d(w,z)\leq R^\star$. For $v\in\calL^{12},w\in\calL^{21}$ (both are facility locations), connect them using a link with \emph{unbounded} capacity if $d(v,w)\leq B$. 
	\item Connect every dummy cluster in $\calL^{11}$ with every facility location vertex in $\calL^{12}$. Connect every dummy cluster in $\calL^{22}$ with every facility location vertex in $\calL^{21}$. Every such link has unit capacity.
	\item Finally, the source $\source$ is connected to every vertex in $\calL^{11}$
and the sink $\sink$ is connected to every vertex in $\calL^{22}$, with every edge having unit capacity.
\end{enumerate} 

\begin{lemma}\label{lem:2supflow}
$\calG(\calV,\calE)$ admits a flow of value $k$. Moreover, we can obtain a feasible solution of cost at most $3R^\star$ from an integral flow of value $k$ in $\calG(\calV,\calE)$.
\end{lemma}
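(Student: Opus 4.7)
The plan is to prove the two claims separately. For the existence of a flow of value $k$, I would build one directly from the optimal solution $(A_1^*, A_2^*)$ together with the distance-$\leq B$ bijection $M^*$ between them. The key observation is that the representatives chosen for the vertices of $\calL^{11}$ are pairwise at distance strictly greater than $2R^*$, so by optimality no single facility in $A_1^*$ can lie within $R^*$ of two different representatives. Consequently, each representative is served in the optimum by a \emph{distinct} facility of $A_1^*$, and I may route one unit from the source through that cluster to its serving facility along the corresponding edge of $\calG$, which is present because its length is $\leq R^*$. Any facilities of $A_1^*$ not used by real clusters receive flow from dummy clusters via the unit-capacity ``dummy'' edges, which form a complete bipartite connection. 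A symmetric argument routes the flow from $\calL^{21}$ to $\calL^{22}$ using $A_2^*$. Finally, $M^*$ sends each of the $k$ units across the middle layer $\calL^{12}\to\calL^{21}$ on an edge of length $\leq B$, which has unbounded capacity, yielding an $s$-$t$ flow of value $k$.

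For the second claim, I would take any integral flow of value $k$ in $\calG$ (which exists because all capacities are integral) and decompose it into $k$ unit-flow $s$-$t$ paths. Each path traverses exactly one vertex in each of $\calL^{12}$ and $\calL^{21}$; collect these into multi-sets $A_1\subset F_1$ and $A_2\subset F_2$, each of size $k$. The decomposition pairs up $A_1$ and $A_2$ into a bijective matching where every matched pair lies at distance $\leq B$, so the movement constraint is satisfied. For the service cost, each real cluster's representative $j$ is within $R^*$ of the facility $f\in A_1$ that carries its unit of flow, because only edges of length $\leq R^*$ exist between $\calL^{11}$ and $\calL^{12}$; since every other client in the cluster lies within $2R^*$ of $j$ by construction, the triangle inequality yields $d(j',f)\leq 3R^*$ for every $j'\in C_1$, and the same bound holds at time $2$.

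I do not anticipate a serious obstacle. The representative separation threshold of $2R^*$ is chosen precisely to force distinct serving facilities in the optimum; the dummy clusters bridge the gap between the number of real clusters and $k$, which is what makes an $s$-$t$ flow of full value $k$ rather than merely $|\calL^{11}|$ achievable. The only small point requiring care is that a facility node may carry multiple units of flow and hence appear in $A_t$ with multiplicity greater than one, but this is legitimate because the problem statement explicitly allows $A_t$ to be a multi-set.
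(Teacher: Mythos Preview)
Your proposal is correct and follows essentially the same approach as the paper's proof: both construct the flow of value $k$ from the optimal solution by using the separation $d(j,j')>2R^*$ of representatives to assign each a distinct optimal facility, fill the remainder with dummy clusters, and push the matching $M^*$ across the middle layer; and both extract $A_1,A_2$ from an integral $k$-flow (you via path decomposition, the paper via the total flow $g(i)$ through each facility vertex, which is equivalent) and bound the service cost by $2R^*+R^*=3R^*$ via the triangle inequality.
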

\begin{proof} Consider an optimal solution $A_1\subseteq F_1,A_2\subseteq F_2$ with objective $R^\star$, and there exists a perfect matching between the two. For any $i\in F_1,i'\in F_2$, if the pair $(i,i')$ appears $m$ times in the perfect matching, define a flow value $f(i,i')=m$ over link $(i,i')$. Notice that this is well-defined, since one must have $d(i,i')\leq B$ and the corresponding vertices are linked in the network.

Consider the first time step. For any facility location $i$ and two non-dummy cluster centers $j,\,j'$, either $d(i,j)$ or $d(i,j')$ is larger than $R^\star$, otherwise $d(j,j')\leq 2R^\star$ using triangle inequality, contradicting with our construction. This means that the subsets of facility locations linked with the non-dummy clusters are disjoint. Next, since $A_1$ covers all $j\in C_1$ with radius $R^\star$, for every cluster center $j\in \calL^{11}$, we can always find a distinct open facility $i\in A_1$ such that $d(i,j)\leq R^\star$, and add a unit flow as $f(j,i)=1$. The same process is repeated for $\calL^{22}$ and $A_2$.

The total flow between $\calL^{12}$ and $\calL^{21}$ is now obviously $k$, since the perfect matching between $A_1$ and $A_2$ has size $k$. After the construction of unit flows for non-dummy clusters, we need to satisfy the flow conservation constraint at all facility vertices. Arbitrarily direct the remaining flows from facility vertices to \emph{dummy} clusters, one unit each time, and this always satisfies the flow conservation constraint at facility vertices since the total numbers of clusters are both $k$ in $\calL^{11}$ and $\calL^{22}$. Finally, for every cluster with unit flow, define the flow value between it and the source/sink as 1. This completes an integral flow of value $k$ on $\calG$.

For the second assertion in the lemma, suppose we have an integral flow $\intflow$ of value $k$ on $\calG$. For any facility location $i\in F_1$, denote $g(i)$ the total flow through $i$. We place $g(i)$ facilities at location $i$, and repeat the same procedures for $i'\in F_2$. If $\intflow(i,i')=m$ for $i\in F_1,i'\in F_2$, move $m$ facilities from $i$ to $i'$ in the transition between the two time steps.

For every $j'\in C_1$, if $j$ is the cluster center it belongs to, using triangle inequality and the fact that $j$ has unit flow on its vertex, the nearest open facility for $j$ is at a distance at most $R^\star$, and there exists an open facility at most $d(j',i)\leq d(j',j)+d(j,i)\leq 3R^\star$ away from $j'$.
\end{proof}

\begin{theorem}\label{thm:dks}
	There exists a 3-approximation for \emph{2-\dks}.
\end{theorem}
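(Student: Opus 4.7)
The plan is to reduce the theorem to Lemma~\ref{lem:2supflow} via a standard guessing argument for the optimal radius. The optimal radius $R^*$ must equal $d(j,i)$ for some client $j \in C_1 \cup C_2$ and some candidate facility $i \in F_1 \cup F_2$, so there are at most $(|C_1|+|C_2|) \cdot (|F_1|+|F_2|)$ possible values. I would enumerate all such values in sorted order (or do binary search over them), and for each candidate $R$, build the network $\calG(\calV,\calE)$ exactly as described in Section~\ref{pos:2dks}, but with $R^*$ replaced by $R$ throughout (this affects which client–facility and facility–facility edges are present, as well as the greedy clustering that produces $\calL^{11}$ and $\calL^{22}$).

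For each candidate $R$, I would then compute a maximum integer $\source$–$\sink$ flow in $\calG$. Since every capacity is integral and the underlying flow polytope is totally unimodular, any standard max-flow algorithm returns an integer optimum in polynomial time. By the first half of Lemma~\ref{lem:2supflow}, any $R \ge R^*$ admits a feasible flow of value $k$, so the smallest enumerated $R$ for which the max-flow equals $k$ satisfies $R \le R^*$. At that point, the second half of Lemma~\ref{lem:2supflow} converts the integer flow into an explicit solution $(A_1,A_2)$ with $|A_1|=|A_2|=k$, a one-to-one matching between $A_1$ and $A_2$ in which every matched pair has distance at most $B$, and maximum client service distance at most $3R \le 3R^*$, as required.

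The main thing to be careful about is that the construction of the cluster layers $\calL^{11}$ and $\calL^{22}$ depends on $R$, so $\calG$ has to be rebuilt from scratch for each guess; however, each rebuild and each max-flow computation is polynomial-time, and we make only polynomially many guesses, so the total running time stays polynomial. I do not expect any genuine obstacle beyond this bookkeeping, since Lemma~\ref{lem:2supflow} already packages both the feasibility of the LP-style flow and the integral rounding that loses only the factor $3$. Combined with the $(3-\epsilon)$-inapproximability of plain $k$-supplier cited earlier, this yields a tight $3$-approximation for \dks when $T=2$.
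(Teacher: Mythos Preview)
Your proposal is correct and follows essentially the same approach as the paper: the paper also guesses $R^*$ (via binary search over the polynomially many candidate client--facility distances), builds the network for that guess, and invokes Lemma~\ref{lem:2supflow} together with total unimodularity to extract an integral flow of value $k$ and hence a $3R^*$ solution. The only minor bookkeeping detail you might add is that if a candidate $R$ is too small the greedy clustering can produce more than $k$ non-dummy clusters on one side, in which case that $R$ should simply be declared infeasible before running max-flow; with this check in place, any $R$ that passes yields a valid $3R$ solution by the proof of the second half of Lemma~\ref{lem:2supflow}.
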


\begin{proof}
	Consider the aforementioned network flow instance. It only has integer constraints and the coefficient matrix is totally unimodular. Moreover, there exists a flow of value $k$ due to Lemma \ref{lem:2supflow}, hence we can efficiently compute an integral flow $\intflow$ of value $k$, thus obtaining a 3-approximate solution.
\end{proof}

\subsection{The hardness of pure approximations for \emph{2-\rdks}}

In this section, we show that 2-\rdks admits no pure approximations for any non-trivial factor $\alpha>1$, i.e., multi-criteria $(\alpha,1,1)$-approximation algorithms, complementing our multi-criteria $(3,1-\epsilon,1-\epsilon)$-approximation in the next section for every $\epsilon>0$.

\begin{theorem}\label{2rdkshard}
    There is no polynomial-time multi-criteria $(\alpha,1,1)$-approximation algorithm for \emph{2-\rdks} for any non-trivial factor $\alpha$, unless $\mathrm{P=NP}$.
\end{theorem}
\begin{proof}
    We use the reduction from maximum satisfiability (MAX-SAT), which is known to be NP-hard. By way of contradiction, assume there exists a pure $\alpha$-approximation for 2-\rdks. For an arbitrary SAT instance with boolean variables $x_1,x_2,\dots,x_n$ and $m$ clauses $s_1\wedge s_2\wedge\cdots\wedge s_m$ in conjunctive normal form (CNF), we create a quadripartite graph $G$ and subsequently define the 2-\rdks instance on the graph using its graph metric $d_G$.
    
    $G$ is initially empty. First create the client sets by adding $n+m$ vertices to $G$, where $C_1=\{x_1,\dots,x_n\}$ and $C_2=\{s_1,\dots,s_m\}$, with a slight abuse of notation. For the facility sets, add $4n$ new vertices to $G$, where $F_1=\{y_{1,0},y_{1,1},\dots,y_{n,0},y_{n,1}\}$ and $F_2=\{z_{1,0},z_{1,1},\dots,z_{n,0},z_{n,1}\}$. Here, $y_{i,0}$ and $z_{i,0}$ represent the choice of setting the variable $x_i=0$, and $y_{i,1},\,z_{i,1}$ are similarly defined. 
    
    With the $5n+m$ vertices added to $G$, we proceed to define the edges. For each variable $x_i$, link $(x_i,y_{i,0})$ and $(x_i,y_{i,1})$ using an edge with length 1. For each variable $x_i$, also link $(y_{i,0},z_{i,0})$ and $(y_{i,1},z_{i,1})$ using an edge with length $\alpha$. Finally, for each clause $s_j$ and each literal in $s_j$, link $s_j$ to the corresponding ``literal vertex'' in $F_2$ that satisfies $s_j$, using an edge with length 1. For example, for clause $s_j=x_1\vee x_2\vee\neg x_3$, we link $(s_j,z_{1,1})$, $(s_j,z_{2,1})$ and $(s_j,z_{3,0})$. Also see an illustration in Fig. \ref{figure:boolean}. Evidently, any facility-client connection distance is either 1 or $\geq2\alpha+1$ on this graph metric, within the same time step.
    
    We run our pure $\alpha$-approximation on $G$, with the client sets and facility sets defined as above, and $k=n$, $B=\alpha$. For the outlier constraints, we fix $l_1=n$ and try $l_2=1,2,\dots,m$. Suppose the optimum for the MAX-SAT instance is $m_0$, and it is easy to see that, this optimum induces a solution to the 2-\rdks instance with objective 1 when we have $l_2=m_0$. Since any facility-client connection is either 1 or $\geq2\alpha+1$, our pure $\alpha$-approximation in fact solves MAX-SAT exactly by trying all possible values of $l_2$, in particular when $l_2=m_0$. But this is impossible unless $\mathrm{P=NP}$, hence our initial assumption is incorrect, yielding the theorem.
\end{proof}

\begin{figure}[hbt!]
    \centering
    \includegraphics{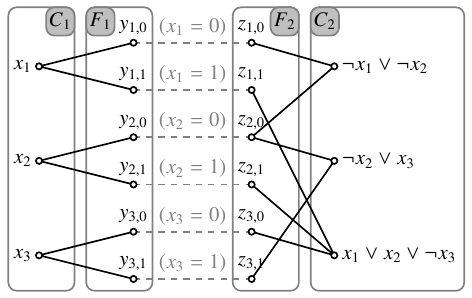}
    \caption{An illustration of the reduction from MAX-SAT to 2-\rdks, where the boolean formula in CNF is $(\neg x_1\vee\neg x_2)\wedge(\neg x_2\vee x_3)\wedge(x_1\vee x_2\vee\neg x_3)$. We use solid lines to represent edges with length 1, and dashed lines for edges with length $\alpha$.}
    \label{figure:boolean}
\end{figure}

As a simple corollary, one can also show the hardness of \emph{guaranteed} multi-criteria $(\alpha,1,1-\epsilon)$ or $(\alpha,1-\epsilon,1)$-approximations for 2-\rdks with some non-trivial $\alpha$ and any constant $\epsilon>0$, using the identical reduction from the above and the fact that MAX-SAT is APX-hard~\cite{arora1998proof}. Here, a guaranteed multi-criteria approximation means that the factors are strictly aligned with the time steps, regardless of the given instance.

\begin{corollary}
    There exists a constant $\epsilon_0\in(0,1)$, such that \emph{2-\rdks} admits no guaranteed multi-criteria $(\alpha,1,1-\epsilon_0)$ or $(\alpha,1-\epsilon_0,1)$-approximations for any non-trivial factor $\alpha>1$, unless $\mathrm{P=NP}$.
\end{corollary}

\subsection{A multi-criteria approximation for \emph{2-\rdks}}

In this section, we present our more involved multi-criteria $(3,1-\epsilon,1-\epsilon)$-approximation for 2-\rdks.
At a high level, the algorithm first guesses the optimal objective via binary search. 
It then guesses a small portion of the unknown optimal solution that ``covers'' as many clients as possible (Section \ref{section:dksout:guess}), and attempts to formulate the remaining problem as finding a bipartite matching that attains good coverage of clients on both sides (Section \ref{section:dksout:matching}).

Before we start, we formulate the problem in an alternative and more convenient way. 
In the original instance, the client sets $C_1,C_2$ and candidate facility locations $F_1,F_2$ all live in the same finite metric space $(X,d)$, while a facility can move to any other location as long as the distance is no more than some given threshold $B$. 
In this section, we create two copies $(X^\pone,d)$ and $(X^\ptwo,d)$ of the original space $(X,d)$, and consider $C_1,F_1\subseteq X^\pone$, $C_2,F_2\subseteq X^\ptwo $ instead. 
Now, a facility $i\in F_1$ can move to $i'\in F_2$ if and only if their corresponding locations in $X$ satisfy $d(i,i')\leq B$. 
Obviously, this formulation is equivalent to the original one. 
In the sequel, we also assume that we have correctly guessed the optimum $R^\star$ via binary search, since it only has a polynomial number of possibilities.

\subsubsection{Guessing and modifying the instance}
\label{section:dksout:guess}

We start with some pre-processing, including some must-have facility locations and clients and excluding some others. The intuition is to guess a constant-size ``heavy'' subset of the optimal solution, and consider the problem on the reduced instance. Denote $B_1(i,R)=\{j\in C_1:d(i,j)\leq R\},\,B_2(i,R)=\{j\in C_2:d(i,j)\leq R\}$. Fix some constant $\gamma>0$ (its value will be determined later), we do the following to modify the original instance.
\begin{enumerate}[Step 1.]
	\item Enumerate all possible choices of $\gamma/\epsilon$ distinct facilities in $F_1$ and $\gamma/\epsilon$ distinct facilities in $F_2$. Denote the two chosen sets of facilities $T_1,\,T_2$. Additionally, enumerate all possible movements associated with $T_1$ and $T_2$, i.e., $g:T_1\rightarrow F_2$ and $h:T_2\rightarrow F_1$ such that $\forall i\in T_1,d(i,g(i))\leq B,\,\forall i'\in T_2,d(i',h(i'))\leq B$.
	\item Recursively sort $T_1$, each time by choosing the unchosen $i\in T_1$ such that $B_1(i,3R^\star)$ covers the largest number of uncovered clients in $C_1$, and letting this number be $u_i^\pone$. Denote $u_0^\pone$ the number of remaining clients that are covered by $\bigcup_{i\in h(T_2)}B_1(i,3R^\star)$. Same for $T_2$ and $C_2$, and we have $u_0^\ptwo $ and $u_i^\ptwo $ for $i\in T_2$. Denote $C_1'=C_1\setminus\left(\bigcup_{i\in T_1\cup h(T_2)}B_1(i,3R^\star)\right)$ and $C_2'=C_2\setminus\left(\bigcup_{i\in T_2\cup g(T_1)}B_2(i,3R^\star)\right)$, i.e., the clients that are not covered by any of the closed balls above.
	\item Remove any $i'\in F_1$ s.t. $|B_1(i',3R^\star)\cap C_1'|>\min_{i\in T_1}\{u_i^\pone\}$ and any $i'\in F_2$ s.t. $|B_2(i',3R^\star)\cap C_2'|>\min_{i\in T_2}\{u_i^\ptwo \}$. Denote the remaining two facility sets $F_1'$ and $F_2'$. 
	Notice that each removed $i'\in F_1$ is \emph{not} co-located with $T_1\cup h(T_2)$, and each removed $i'\in F_2$ is \emph{not} co-located with $T_2\cup g(T_1)$;
	\item Denote the reduced problem $\calP'$, where the client sets are $C_1',\,C_2'$, the facility sets are $F_1',\,F_2'$ and the outlier constraints are $l_1'=\max\left\{l_1-u_0^\pone-\sum_{i\in T_1}u_i^\pone,0\right\}$, $l_2'=\max\left\{l_2-u_0^\ptwo -\sum_{i\in T_2}u_i^\ptwo ,0\right\}$, respectively.
\end{enumerate}

\begin{lemma}\label{smallcoro}
In the reduced problem $\mathcal{P}'$, every facility location $i\in F_1'$ covers at most $\frac{\epsilon}{\gamma}\left(|C_1\setminus C_1'|-u_0^\pone\right)$ clients with radius $3R^\star$, and every facility location $i\in F_2'$ covers at most $\frac{\epsilon}{\gamma}\left(|C_2\setminus C_2'|-u_0^\ptwo \right)$ clients with radius $3R^\star$.
\end{lemma}
\begin{proof}
First, if $i\in F_1'$ is co-located with some location in $T_1\cup h(T_2)$, the number of clients it covers in $C_1'$ is obviously zero, since they are all removed from $C_1'$. Otherwise, according to our construction of $F_1'$, for any $i\in F_1'$ that are not co-located with $T_1\cup h(T_2)$, it must cover at most $\min_{i\in T_1}\{u_i^\pone\}$ remaining clients in $C_1'$ with radius $3R^\star$, otherwise it would have been removed during the construction of $\calP'$. Of course, this number is also smaller than the average of all $u_i^\pone$'s, which is exactly
\[\frac{\epsilon}{\gamma}\sum_{i\in T_1}u_i^\pone=\frac{\epsilon}{\gamma}\left(|C_1\setminus C_1'|-u_0^\pone\right),\]
where we recall $C_1\setminus C_1'$ contains all clients covered by $T_1\cup h(T_2)$ and $|T_1|=\gamma/\epsilon$. The case with $F_2'$ is identical.
\end{proof}

\begin{lemma}\label{lp-guess-top}
There exists a guess $(T_1,T_2,g,h)$ such that the original instance reveals a solution $(A_1\subseteq F_1',A_2\subseteq F_2')$ with objective at most $R^\star$, satisfying $T_1\cup h(T_2)\subseteq A_1,\,T_2\cup g(T_1)\subseteq A_2$ in terms of multi-sets, facility at $i\in T_1$ moves to $g(i)\in g(T_1)$ and facility at $i'\in T_2$ comes from $h(i')\in h(T_2)$. Moreover, there are at most $O\left(\left(|F_1|\cdot|F_2|\right)^{2\gamma/\epsilon}\right)$ such different guesses.
\end{lemma} 
\begin{proof} The number of possible guesses is easy to see from the definition. Consider the optimal solution $(U_1\subseteq F_1,U_2\subseteq F_2)$ to the \emph{original problem}. Choose $\gamma/\epsilon$ facilities $U_1'\subseteq U_1$, such that the total number of clients they cover in radius $3R^\star$ is maximized. Define $U_2'\subseteq U_2$ similarly. Assume that we have made the correct guesses $T_1=U_1',\,T_2=U_2'$ (see that their cardinalities are the same), and our guesses $g$ and $h$ correctly depict their movement in the optimal solution.

Now we only need to prove that $U_1\subseteq F_1'$ and $U_2\subseteq F_2'$. Since $U_1'=T_1$, we can sort $U_1'$ in the same way as $T_1$, and it is easy to see that any $i'\in U_1\setminus U_1'$ can only cover at most $\min_{i\in T_1}\{u_i^\pone\}$ new clients in $C_1\setminus\bigcup_{i\in T_1}B_1(i,3R^\star)$, otherwise we may always replace the last one in $U_1'$ with $i'$ and cover more clients, a contradiction. 
Since $C_1'$ is a subset of $C_1\setminus\bigcup_{i\in T_1}B_1(i,3R^\star)$, each $i'\in U_1\setminus U_1'$ is \emph{not} removed from $F_1$ when $F_1'$ is created, thus one has $U_1\setminus U_1'\subseteq F_1'$ and $U_1\subseteq F_1'$ as a result. The same argument also shows $U_2\setminus U_2'\subseteq F_2'$ and $U_2\subseteq F_2'$. We recover the lemma by setting $A_1=U_1$ and $A_2=U_2$.
\end{proof}

\begin{remark}
We remark that, while we allow $g(T_1)$ and $h(T_2)$ to induce multi-sets, we only consider the cases where $T_1\subseteq F_1$ and $T_2\subseteq F_2$ are subsets. We briefly explain the reason here. From the proof of the above Lemma \ref{lp-guess-top}, one can easily see that if $U_1$ in the optimum (as a multi-set itself) has less than $\gamma/\epsilon$ distinct members of $F_1$, we can simply guess these members, delete every other facility location that is not co-located with them, and still have an optimal solution with objective $R^\star$ to the remaining instance. This process takes at most $|F_1|+|F_1|^2+\cdots+|F_1|^{\gamma/\epsilon}=O(|F_1|^{1+\gamma/\epsilon})$ guesses in total, and the remaining instance is much easier to solve, since the guessed distinct facilities in $U_1$ should already cover at least $l_1$ clients with radius $R^\star$. Therefore, we omit these scenarios here for simplicity.
\end{remark}

From now on, further assume that we have made the correct guess $(T_1,T_2,g,h)$ as shown in Lemma \ref{lp-guess-top} and reached the reduced instance $\calP'$. We define the following natural LP relaxation. By adding a superscript to every variable to indicate the time step, denote $x_{ij}^\pt\in[0,1]$ the partial assignment of client $j$ to facility $i$ and $y_i^\pt\geq0$ the extent of opening facility location $i$ at time step $t$. Moreover, denote $z_{ii'}$ the extent of movement from facility $i$ to facility $i'$, between neighboring time steps $t=1$ and $t=2$.
\begin{alignat*}{3}
\sum_{j\in C_t'}\sum_{i\in F_t',d(i,j)\leq R^\star}x_{ij}^\pt&\geq l_t'&\quad&\forall t=1,2\tag{$\mathrm{LP}(\calP')$}\label{lp-center-reduced}\\
	\sum_{i\in F_t',d(i,j)\leq R^\star}x_{ij}^\pt&\leq 1&&\forall j\in C_t',\,t=1,2\\
	\sum_{i\in F_t'}y_i^\pt&=k&&\forall t=1,2\\
	0\leq x_{ij}^\pt&\leq  y_i^\pt&&\forall i\in F_t',j\in C_t',\,t=1,2\\
	x_{ij}^\pt&=0&&\forall i\in F_t',j\in C_t',d(i,j)>R^\star,\,t=1,2\\
	\sum_{i\in F_1',d(i,i')\leq B}z_{ii'}&=y_{i'}^\ptwo &&\forall i'\in F_2'\\
	\sum_{i'\in F_2',d(i,i')\leq B}z_{ii'}&=y_{i}^\pone&&\forall i\in F_1'\\
	z_{ii'}&\geq 0&&\forall i\in F_1',i'\in F_2'\\
	z_{ii'}&=0&&\forall i\in F_1',i'\in F_2',d(i,i')>B\\
	z_{ig(i)}&\geq1&&\forall i\in T_1\\
	z_{h(i')i'}&\geq1.&&\forall i'\in T_2
\end{alignat*}

\begin{lemma}
\ref{lp-center-reduced} is feasible.
\end{lemma}

\begin{proof}
	Consider the optimal solution $U_1\subseteq F_1,U_2\subseteq F_2$, and define the variables ${x}_{ij}^{\star(t)},{y}_i^{\star(t)}$ and ${z}_{ii'}^{\star}$ which are restricted to the reduced instance $\calP'$ accordingly. Using Lemma \ref{lp-guess-top}, if $F_1',F_2'$ is computed according to $T_1$ and $T_2$, we have $U_1\subseteq F_1',U_2\subseteq F_2'$, and it is easy to check that all but the first constraint of \ref{lp-center-reduced} are satisfied by $(x^\star,y^\star,z^\star)$.
	
	Now consider the first constraint. In the optimal solution, $U_1$ covers at least $l_1$ clients in $C_1$ with radius $R^\star$ and $T_1\subseteq U_1,h(T_2)\subseteq U_1$. The facilities in $T_1\cup h(T_2)$ cover exactly $u_0^\pone+\sum_{i\in T_1}u_i^\pone$ clients with radius $3R^\star>R^\star$, which are all included in $C_1\setminus C_1'$. Evidently, for those in $C_1'$ that are \emph{not} covered by the previous larger balls, the remaining facilities in $U_1\setminus(T_1\cup h(T_2))$ have to cover at least $\max\{l_1-u_0^\pone-\sum_{i\in T_1}u_i^\pone,0\}=l_1'$ of them with radius $R^\star$. The same holds for $C_2'$ and $l_2'$, thus the first constraint is also satisfied by $(x^\star,y^\star,z^\star)$, and \ref{lp-center-reduced} is feasible.
\end{proof}

\subsubsection{Matching-based LP rounding}
\label{section:dksout:matching}

Given a fractional solution $(x,y,z)$ to \ref{lp-center-reduced}, define $s_j^\pt=\sum_{i\in F_t',d(i,j)\leq R^\star}x_{ij}^\pt$ the extent of connection of client $j\in C_t'$. We conduct the standard filtering algorithm (see, e.g.,~\cite{harris2019lottery,chakrabarty2019generalized}) to filter the clients, defined as in Algorithm \ref{algo:greedy}. According to the definition of the algorithm, for any $j\in C_t'$, there exists some $j'\in C_t''$ such that $d(j,j')\leq 2R^\star$. Therefore, if we can cover all the clients in $C_t''$ with a radius of at most $\kappa R^\star$, using triangle inequality, every client in $C_t'$ can be covered within radius $(\kappa+2) R^\star$. 

\begin{algorithm}[hbt!]
\caption{GREEDYFILTER}\label{algo:greedy}
\DontPrintSemicolon
\SetKwInOut{Input}{Input}
\SetKwInOut{Output}{Output}

\Input{$\calP',R^\star,(x,y,z)$}
\Output{two subsets of filtered clients for $t=1,2$, with each client having a certain profit value}
\For{$t=1,2$}{
$C_t''\leftarrow\emptyset$\;
\For{unmarked $j\in C_t'$ in non-increasing order of $s_j^\pt$}{
$C_t''\leftarrow C_t''\cup\{j\}$\;
set each \emph{unmarked} $j'\in C_t'$ s.t. $d(j,j')\leq 2R^\star$ as marked\;
let $c_j^\pt$ be the number of clients marked in this iteration
\tcp*{the ``profit'' of $j\in C_t''$}
}
$c^\pt\leftarrow\left(c_j^\pt:j\in C_t''\right)$\;}
\Return $(C_t'',c^\pt),\,t=1,2$\;
\end{algorithm}

The following two lemmas illustrate the relative sparsity of $C_1'$ around (almost) each client $j\in C_1''$ and provide an inequality which will be useful for achieving our final approximate solution. The lemmas for $t=2$ are the same, hence omitted here.

\begin{lemma}\label{lemma:sparse}
For any $j\in C_1''$ with $s_j^\pone>0$, $c_j^\pone\leq\frac{\epsilon}{\gamma}\left(|C_1\setminus C_1'|-u_0^\pone\right)$.
\end{lemma}
\begin{proof}
Evidently, $c_j^\pone$ is at most the number of clients in $C_1'$ that are $\leq 2R^\star$ away from $j$. Since $s_j^\pone>0$, there exists $i'\in F_1'$ with $d(i',j)\leq R^\star$, and these $c_j^\pone$ clients are at most $3R^\star$ away from $i'$ using triangle inequality. One also has that $i'$ is not co-located with $T_1\cup h(T_2)$, since otherwise $j$ would have been removed from $C_1$ during pre-processing, which is a contradiction. Using Lemma \ref{smallcoro}, $c_j^\pone$ is at most the number of clients in $C_1'$ covered by $i'$ with radius $3R^\star$, hence at most
$\left(|C_1\setminus C_1'|-u_0^\pone\right)\epsilon/\gamma$.
\end{proof}

We note that if $s_j^\pone=0$, the lemma above does not hold. Nevertheless, our algorithm handles such clients by simply ignoring them in the rounding process.

\begin{lemma}\label{lemma:outlierguarantee}
	$\sum_{j\in C_1''}c_j^\pone s_j^\pone\geq l_1'.$
\end{lemma}
\begin{proof}
In Algorithm \ref{algo:greedy}, the subsets are processed in non-increasing order of $s_j^\pone$, and $c_j^\pone$ denotes the number of clients that are marked in each step, hence one has
\[\sum_{j\in C_1''}c_j^\pone s_j^\pone\geq\sum_{j\in C_1'}s_j^\pone\geq l_1',\]
where we recall that $s_j^\pone=\sum_{i\in F_1',d(i,j)\leq R^\star}x_{ij}^\pone$, and the last inequality is due to the feasibility of $x_{ij}^\pone$ for \ref{lp-center-reduced}.
\end{proof}

We first use the following Algorithm \ref{algo:split} to modify the LP solution. The main goal of this algorithm is to duplicate, and in some cases, merge facilities, such that the modified solution $(\hat y, \hat z)$ is easier to round on the new instance.

\begin{algorithm}[hbt!]
\caption{SPLIT\&MERGE}\label{algo:split}
\DontPrintSemicolon
\SetKwInOut{Input}{Input}
\SetKwInOut{Output}{Output}

\Input{$\calP',R^\star,(x,y,z),C_1'',C_2'',c^\pone,c^\ptwo ,T_1,T_2,g,h$}
\Output{a modified instance of 2-\rdks and a feasible fractional solution}

$F_1''\leftarrow F_1',F_2''\leftarrow F_2',\hat y\leftarrow y,\hat z\leftarrow z$\;
\For{each distinct pair $(i,i')\in\{(i,g(i)):i\in T_1\}\cup\{(h(i'),i'):i'\in T_2\}$}{
create co-located copies $i_1,i_1'$ of $i,i'$, respectively, $F_1''\leftarrow F_1''\cup\{i_1\},F_2''\leftarrow F_2''\cup\{i_1'\}$\;
$\hat y_{i_1}^\pone\leftarrow1,\hat y_{i_1'}^\ptwo \leftarrow1,\hat z_{i_1i_1'}\leftarrow1$, $\hat y_i^\pone\leftarrow \hat y_i^\pone-1,\hat y_{i'}^\ptwo \leftarrow \hat y_{i'}^\ptwo -1,\hat z_{ii'}\leftarrow \hat z_{ii'}-1$\;}
\For{$t=1,2$ and $j\in C_t''$}{
$E_j^\pt\leftarrow\{i\in F_t'':d(i,j)\leq R^\star\}$\;
\While{there exists $i\in E_j^\pt$ s.t. $\hat y_i^\pt>x_{ij}^\pt$}{
split $i$ into co-located copies $i_1,i_2$, $\hat y_{i_1}^\pt\leftarrow x_{ij}^\pt$, $\hat y_{i_2}^\pt\leftarrow\hat y_i^\pt-\hat y_{i_1}^\pt$, split corresponding movement variables $\hat z$ accordingly\;
$F_t''\leftarrow \left(F_t''\setminus\{i\}\right)\cup\{i_1,i_2\}$, $E_j^\pt\leftarrow \left(E_j^\pt\setminus\{i\}\right)\cup\{i_1\}$\;}
merge all locations in $E_j^\pt$ into a single one $f_j^\pt$, $\hat y_{f_j^\pt}^\pt\leftarrow s_j^\pt$, merge corresponding $\hat z$ values accordingly\;\label{mergecluster}
$F_t''\leftarrow \left(F_t''\setminus E_j^\pt\right)\cup\{f_j^\pt\}$\;}
\While{there exists $i\in F_t''$ s.t. $\hat y_{i}^\pt>1$}{
split $i$ into $\lceil \hat y_i^\pt\rceil$ co-located copies, such that the first $\lfloor \hat y_i^\pt\rfloor$ of them have $\hat y^\pt$ values 1, and the last one (if any) has $\hat y^\pt$ value $\hat y_i^\pt-\lfloor \hat y_i^\pt\rfloor$, split corresponding movement variables $\hat z$ accordingly\;}
\Return $(\hat y,\hat z),F_1'',F_2''$\;
\end{algorithm}

In Algorithm \ref{algo:split}, the first loop is used to reserve the variables for our pre-selected facilities $T_1,T_2$ and $g(T_1),h(T_2)$. We note that this is always possible, since in \ref{lp-center-reduced}, we explicitly set $z_{ig(i)}\geq1$ and $z_{h(i')i'}\geq1$ for $i\in T_1$ and $i'\in T_2$, which also implicitly sets up lower bounds for relevant $y$ variables. Secondly, we define a subset $E_j^\pt$ of facilities that are close to $j\in C_t'',t=1,2$. Using the standard facility duplication technique, we make sure that $\forall i\in E_j^\pt$, one has $\hat y_i^\pt=x_{ij}^\pt$, which in turn implies that $\hat y^\pt(E_j^\pt)=s_j^\pt$. This even holds for $j\in C_t''$ with $s_j^\pt=0$. Using this property, we simply merge the facilities in $E_j^\pt$ into a single one, called $f_j^\pt$, and assign it $\hat y_{f_j^\pt}^\pt\leftarrow s_j^\pt$ (recall that $F_1''$ and $F_2''$ live in \emph{different} metric spaces now, so this merging process has no effect on the other time step). We also clarify that, whenever a facility location $i$ is split into multiple copies, we always need to respect the fractional movement variables $\hat z$ and make sure that they are also properly split in order to satisfy the ``conservation'' of fractional facilities. Here, we do not need any particular splitting criteria, and simply use an arbitrary one.

From now on, we view the facilities in $F_1''$ and $F_2''$ as vertices in a bipartite graph, and the variables in $\hat z$ as a fractional matching on it. Roughly speaking, if some $\hat z_{ii'}$ is rounded to one, it means that a facility is moved from $i\in F_1''$ to $i'\in F_2''$. Further, if $i$ is equal to some merged facility $f_j^\pone$ with $j\in C_1''$, it means that some facility in $E_j^\pone$ (before merging) can be opened, and we can use this facility to cover at least $c_j^\pone$ distinct clients in $C_1'$ with radius $3R^\star$, using triangle inequality and the fact that any client $j'$ marked by $j$ satisfies $d(j,j')\leq 2R^\star$. This observation motivates us to assign a ``profit'' value to each merged facility, and define a multi-objective optimization problem in the following lemma.

\begin{lemma}\label{lemma:matching}
For the modified LP solution $(\hat y,\hat z)$ by Algorithm \ref{algo:split}, create a bipartite graph $G=\left(V_1\cup V_2,E\right)$, where $V_1,V_2$ represent the entries in $F_1'',F_2''$, respectively, and edge $(i,i')\in E$ is defined for every non-zero $\hat z_{ii'}>0$. $\hat z$ is a fractional $k$-cardinality bipartite matching over $G$.
	
Moreover, assign profit $p_1(e)=c_j^\pone$ to every edge $e$ incident on merged facility $f_j^\pone,j\in C_1''$, profit $p_2(e')=c_{j'}^\ptwo $ to every edge $e'$ on merged facility $f_{j'}^\ptwo ,j'\in C_2''$, and set all other profit values to zero. $\hat z$ satisfies the following ``profitability constraints'', \[P_1(\hat z)=\sum_{e\in E}\hat z_ep_1(e)\geq l_1',\,P_2(\hat z)=\sum_{e\in E}\hat z_ep_2(e)\geq l_2'.\]
\end{lemma}
\begin{proof}
	In Algorithm \ref{algo:split}, for every merged facility $f_j^\pt$, it satisfies $\hat y_{f_j^\pt}^\pt=s_j^\pt\leq 1$. For other facilities, we split those with $\hat y_i>1$ into co-located copies, and every copy is matched up to an extent of 1. The total extent of matched edges is directly from the constraint $\sum_{i\in F_1'}y_i^\pone=k$ in \ref{lp-center-reduced}. Hence $\hat z$ is a fractional $k$-cardinality matching on $G$.
	
	To see that the profitability constraints are satisfied, we focus on $t=1$. Since each edge that is not incident on any $f_j^\pone,\,j\in C_1''$ has $p_1(e)=0$, we may rewrite the total profit as,
	\[P_1(\hat z)=\sum_{j\in C_1''}\sum_{e=\left(f_j^\pone,i''\right)}\hat z_ep_1(e)=\sum_{j\in C_1''}c_j^\pone\left(\sum_{e=\left(f_j^\pone,i''\right)}\hat z_e\right)=\sum_{j\in C_1''}c_j^\pone \hat y_{f_j^\pone}^\pone=\sum_{j\in C_1''}c_j^\pone s_j^\pone\geq l_1',\]
	where the last inequality is due to Lemma \ref{lemma:outlierguarantee}.
\end{proof}

We also notice that if $s_j^\pone=0,j\in C_1''$, the merged facility satisfies $\hat y_{f_j^\pone}^\pone=0$ and every associated movement variable also satisfies $\hat z_{f_j^\pone i'}=0$. This means that no edges are incident on $f_j^\pone$, and we can simply remove such vertices from the graph. With the feasibility lemma above, we present the following main theorem, in which we round the aforementioned fractional bipartite matching on $G$ to an integral matching, and directly obtain a solution to the original 2-\rdks instance.
\begin{theorem}\label{thm:dksout}
	For any constant $\epsilon>0$, there exists a multi-criteria $(3,1-\epsilon,1-\epsilon)$-approximation for \emph{2-\rdks}.
\end{theorem}

\begin{proof} We consider the fractional matching induced by $\hat z$ in Lemma \ref{lemma:matching} and adapt the rounding procedures in Section~4 of~\cite{grandoni2014new}, with some necessary modifications. First, since Algorithm \ref{algo:split} explicitly sets $\hat z_{ig(i)}=1$ and $\hat z_{h(i')i'}=1$ for each $i\in T_1,\,i'\in T_2$, these edges are always matched, and we can remove them in advance. Suppose there are $\kappa\geq0$ such edges, and we remove these edges and their endpoints. 

Notice that $p_1(e)=p_2(e)=0$ if $e$ is removed in this process. To see this, both endpoints of $e$ are in $T_1\cup h(T_2)$ and $T_2\cup g(T_1)$, and every client within $3R^\star$ from them is already removed from $C_1$ and $C_2$. W.l.o.g., assume $p_1(e)>0$ for some $e$ removed, which means that there exists $j\in C_1''$ such that $E_j^\pone$ contained some $i\in T_1\cup h(T_2)$ before being merged. This puts $d(i,j)\leq R^\star$, and we would have removed $j$ from $C_1$ during pre-processing, a contradiction.

Let $P_\calM$ be the matching polytope of the remaining graph, and consider the following LP,
\begin{equation}
    \left\{\min \mathbbm{1}^\top z:z\in P_\calM,\,P_1(z)=\sum_{e\in E}z_ep_1(e)\geq l_1',\,P_2(z)=\sum_{e\in E}z_ep_2(e)\geq l_2'\right\}.
\end{equation}

According to Lemma \ref{lemma:matching}, the solution $\hat z$ given by Algorithm \ref{algo:split} is feasible to the above LP with objective $k-\kappa$ (after the removal of those must-have edges, with each of them having zero profits), hence the optimum is at most $k-\kappa$. Let $z_0$ be such an optimal basic solution with objective $\mathbbm{1}^\top z_0\leq k-\kappa$.

Now that $z_0$ is a basic solution, it lies on a face of $P_\mathcal{M}$ of dimension at most $2$, thus using Carath\'eodory theorem, it can be written as the convex combination of 3 (integral) basic solutions of $P_\calM$, say $z_0=\alpha_1z_1+\alpha_2z_2+\alpha_3z_3$, where $\alpha_i\in[0,1],\alpha_1+\alpha_2+\alpha_3=1$, and $z_1,z_2,z_3$ are three basic solutions to $P_\calM$. W.l.o.g., we assume that $\alpha_1,\alpha_2,\alpha_3$ are all positive numbers.

\begin{description}
\item[Construction of an intermediate matching $\hat z_2$.]
We create an almost-matching that fractionally combines $z_1,z_2$. To be more precise, let $z_{1,2}=\frac{\alpha_1}{\alpha_1+\alpha_2}z_1+\frac{\alpha_2}{\alpha_1+\alpha_2}z_2$ be the convex combination of $z_1$ and $z_2$ thus a fractional matching, we want to find another almost-integral matching $z_2'\in[0,1]^E$ s.t.
\begin{equation}
P_1(z_2')=P_1(z_{1,2}),P_2(z_2')=P_2(z_{1,2}),\mathbbm{1}^\top z_2'=\mathbbm{1}^\top z_{1,2},\label{eq:1st:matching}
\end{equation}
and it is possible to set at most 4 variables in $z_2'$ to 0 and obtain an integral matching. Using Corollary~4.10 in~\cite{grandoni2014new}, such an almost-matching $z_2'$ exists and can be efficiently computed. We then set at most 4 variables in $z_2'$ to 0 and obtain a matching $\hat z_2$. It is obvious that
\begin{equation}P_1(\hat z_2)\geq P_1(z_2')-4\max_{e\in E}p_1(e),\,P_2(\hat z_2)\geq P_2(z_2')-4\max_{e\in E}p_2(e),\,\mathbbm{1}^\top \hat z_2\leq\mathbbm{1}^\top z_2'.\label{eq:2nd:matching}
\end{equation}
\item[Construction of the final matching $\hat z_3$.]
Using Corollary~4.10 in~\cite{grandoni2014new} again, let $z_{2,3}=(\alpha_1+\alpha_2)\hat z_2+\alpha_3z_3$, we can efficiently find $z_3'\in[0,1]^E$ s.t.
\begin{equation}
P_1(z_3')=P_1(z_{2,3}),P_2(z_3')=P_2(z_{2,3}),\mathbbm{1}^\top z_3'=\mathbbm{1}^\top z_{2,3},\label{eq:3rd:matching}
\end{equation}
and it is possible to set at most 4 variables in $z_3'$ to 0 and obtain a matching $\hat z_3$. It is obvious that
\begin{equation}P_1(\hat z_3)\geq P_1(z_3')-4\max_{e\in E}p_1(e),\,P_2(\hat z_3)\geq P_2(z_3')-4\max_{e\in E}p_2(e),\,\mathbbm{1}^\top \hat z_3\leq\mathbbm{1}^\top z_3'.\label{eq:4th:matching}
\end{equation}

Combining~\eqref{eq:1st:matching}\eqref{eq:2nd:matching}\eqref{eq:3rd:matching}\eqref{eq:4th:matching}, one has
\begin{align*}
	P_1(\hat z_3)&\geq P_1(z_3')-4\max_{e\in E}p_1(e)= P_1(z_{2,3})-4\max_{e\in E}p_1(e)\\
	&= (\alpha_{1}+\alpha_2)P_1(\hat z_2)+\alpha_3P_1(z_3)-4\max_{e\in E}p_1(e)\\
	&\geq  (\alpha_{1}+\alpha_2)\left(P_1(z_2')-4\max_{e\in E}p_1(e)\right)+\alpha_3P_1(z_3)-4\max_{e\in E}p_1(e)\\
	&\geq  (\alpha_{1}+\alpha_2)P_1(z_{1,2})+\alpha_3P_1(z_3)-8\max_{e\in E}p_1(e)\\
	&=P_1(\alpha_1z_1+\alpha_2z_2+\alpha_3z_3)-8\max_{e\in E}p_1(e)\geq l_1'-\frac{8\epsilon}{\gamma}\left(|C_1\setminus C_1'|-u_0^\pone\right),
\end{align*}
where the last inequality is due to Lemma \ref{lemma:sparse} and the fact that any non-zero profit must be defined for an edge incident on $f_j^\pone,\,j\in C_1''$ with $0<\hat y_{f_j^\pone}^\pone=s_j^\pone$. Similarly, one has $P_2(\hat z_3)\geq l_2'-\frac{8\epsilon}{\gamma}\left(|C_2\setminus C_2'|-u_0^\ptwo \right)$ and easily sees $\mathbbm{1}^\top \hat z_3\leq \mathbbm{1}^\top z_3'=\mathbbm{1}^\top z_{2,3}=(\alpha_1+\alpha_2)\mathbbm{1}^\top \hat z_2+\alpha_3\mathbbm{1}^\top z_3\leq(\alpha_1+\alpha_2)\mathbbm{1}^\top z_{1,2}+\alpha_3\mathbbm{1}^\top z_3=\mathbbm{1}^\top z_0\leq k-\kappa$. Hence the cardinality of $\hat z_3$ is at most $k-\kappa$.

\item[Construction of the output solution.]
Let $\gamma=8$, $M$ be the set of edges matched in $\hat z_3$ plus the $\kappa$ edges removed in the beginning, and $A_1,A_2$ be two multi-sets that are initially empty. We have $|M|\leq k$. For each $e\in M$, consider the following cases.
\begin{description}
    \item[$e$ is among the $\kappa$ edges removed in the beginning.] Add the two endpoints to $A_1$ and $A_2$, respectively.
    \item[$e$ is not incident on any merged facility.] Add the two endpoints to $A_1$ and $A_2$, respectively.
    \item[$e$ is incident on exactly one merged facility.] W.l.o.g., let this endpoint be $f_j^\pone,j\in C_1''$ and the other be $i'\in F_2''$. Notice that according to Lemma \ref{lemma:matching} and the merging process in Algorithm \ref{algo:split}, $e\in E$ indicates that $\hat z_{f_j^\pone i'}>0$ when we construct $G$ in the first place, thus there must exist $i\in E_j^\pone$ s.t. $\hat z_{ii'}>0$ before merging and hence $d(i,i')\leq B$. Add $i$ to $A_1$ and $i'$ to $A_2$.
    \item[$e$ is incident on two merged facilities.] Let them be $f_j^\pone,j\in C_1''$ and $f_{j'}^\ptwo,j'\in C_2''$. Again, $e\in E$ implies that $\hat z_{f_j^\pone f_{j'}^\ptwo}>0$, thus there must exist $i\in E_j^\pone$ and $i'\in E_{j'}^\ptwo$ s.t. $\hat z_{ii'}>0$ and hence $d(i,i')\leq B$. Add $i$ to $A_1$ and $i'$ to $A_2$.
\end{description}
\end{description}

The construction of $(A_1,A_2)$ naturally induces a feasible perfect matching between them, such that any matched pair is no more than $B$ away from each other. We already know $(T_1\cup h(T_2))\subseteq A_1$, and it covers all clients in $C_1\setminus C_1'$ with radius $3R^\star$. From the value of $P_1(\hat z_3)$, at least $l_1'-\epsilon\left(|C_1\setminus C_1'|-u_0^\pone\right)$ distinct clients in $C_1'$ are covered additionally with radius $3R^\star$, so the total number of clients covered by $A_1$ (using radius $3R^\star$) is at least
\begin{align*}
P_1(\hat z_3)+|C_1\setminus C_1'|&\geq l_1'-\epsilon\left(|C_1\setminus C_1'|-u_0^\pone\right)+u_0^\pone+\sum_{i\in T_1}u_i^\pone\\
&\geq \max\left\{l_1-u_0^\pone-\sum_{i\in T_1}u_i^\pone,0\right\}+u_0^\pone+\sum_{i\in T_1}u_i^\pone-\epsilon\left(|C_1\setminus C_1'|-u_0^\pone\right)\\
&\geq l_1-\epsilon l_1=(1-\epsilon)l_1,
\end{align*}
where we assume that $|C_1\setminus C_1'|-u_0^\pone<l_1$, otherwise we would have already covered $\geq l_1$ clients using only $T_1$. The proof is the same for $A_2$ and the second time step. Arbitrarily add more facilities to $A_1$ and $A_2$ until $|A_1|=|A_2|=k$, and this gives our final multi-criteria $(3,1-\epsilon,1-\epsilon)$-approximate solution.
\end{proof}

\section{Future work}\label{section:future}

It would be very interesting to remove 
the dependency of $\gamma$ (the coefficient of movement cost) and $\epsilon$ (the lower bound of the weight) from the approximation factor for \dom in Theorem~\ref{thm:dom}, or to show that such dependency is inevitable.
We leave it as an important open problem.
We note that a constant approximation factor for \dom without depending on $\gamma$ would imply a constant approximation for stochastic $k$-server, for which only a logarithmic-factor approximation algorithm is previously known \cite{dehghani2017stochastic}.

Our approximation algorithm for \dom is based on the technique developed by Aouad and Segev \cite{aouad2019ordered} and Byrka~\etal\cite{byrka2018constant}. 
The original ordered $k$-median problem has subsequently seen improved approximation results in \cite{chakrabarty2018interpolating,chakrabarty2019approximation}. 
We did not try hard to optimize the constant factors.
Nevertheless, it is an interesting future direction to improve the constant factors by leveraging new techniques and ideas.
On the other hand, it would also be very interesting to obtain better lower bounds than the trivial lower bounds of $k$-supplier \cite{hochbaum1986unified} and $k$-median \cite{jain2002greedy}.

It is possible to formulate other problems that naturally fit into the dynamic clustering theme and are well motivated by realistic applications.
Besides the obvious variants of \dom and \dks via changing the clustering and movement objectives, we list some examples that are of particular interest.
\begin{itemize}
	\item In each time step, the constraint of opening at most $k$ facilities can be replaced by a matroid or knapsack constraint; that is, we require the open facilities to form an independent set of a given matroid, or to have a total weight no more than a given threshold.
		These formulations generalize the $k$-clustering setting we adopt in this paper.
		Many such clustering problems are studied in the literature, e.g., matroid center \cite{chen2016matroid}, matroid median \cite{swamy2016improved,krishnaswamy2018constant,krishnaswamy2011matroid}, knapsack center \cite{hochbaum1986unified} and knapsack median \cite{swamy2016improved,krishnaswamy2018constant,kumar2012constant,gupta2021structural}, and we can study the dynamic versions of these problems.
	\item We can consider fair dynamic clustering problems. 
		For example, at each time step $t$, the clients have different colors representing the demographics they belong to, and the goal is to cluster the clients such that for each $t$, the proportions of different colors in each cluster are roughly the same as their global proportions at time $t$ (see, e.g., \cite{chierichetti2017fair,bera2019fair,huang2019coresets}).
	\item One may also study other clustering criteria with additional constraints under the dynamic setting, e.g., the fault-tolerant versions \cite{hajiaghayi2016constant,khuller2000fault,swamy2008fault} and the capacitated versions \cite{khuller2000capacitated,chuzhoy2005approximating,li2016approximating}.
\end{itemize}

\section*{Acknowledgements}

Shichuan Deng and Jian Li were supported by the National Natural Science Foundation of China Grant 61822203, 61772297, 61632016, 61761146003, the Zhongguancun Haihua Institute for Frontier Information Technology, Turing AI Institute of Nanjing, and Xi'an Institute for Interdisciplinary Information Core Technology. 
Yuval Rabani was supported by ISF grant number 2553-17.

We thank Chaitanya Swamy for kindly pointing out studies relevant to our results. 
We also thank the anonymous referees for their insightful and constructive comments.


\bibliographystyle{plainurl}
\bibliography{references.bib}

\end{document}